\documentclass[a4paper,UKenglish,cleveref, autoref, thm-restate,authorcolumns]{lipics-v2019}

\def\MdN{\ensuremath{\mathbb{N}}}
\newcommand{\Oh}[1]{\mathcal{O}\!\left( #1\right)}
\newcommand{\Is}{:=}
\newcommand{\etal}{et~al.\ }
\newcommand{\ie}{i.e.\ }
\newcommand{\prOne}{\texttt{HeavyEdge}}
\newcommand{\prTwo}{\texttt{ImbalancedVertex}}
\newcommand{\prThree}{\texttt{ImbalancedTriangle}}
\newcommand{\prFour}{\texttt{HeavyNeighborhood}}
\newcommand{\CC}{C\texttt{++}}

\newcommand{\optZero}{\texttt{BasicCactus}}
\newcommand{\optOne}{\texttt{+Connectivity}}
\newcommand{\optTwo}{\texttt{+LocalContract}}
\newcommand{\optThree}{\texttt{+DegreeOne}}
\newcommand{\optFour}{\texttt{+C\&LInCactus}}
\newcommand{\optFive}{\texttt{+D1InCactus}}
\newcommand{\optSix}{\texttt{FullAlgorithm}}

\setlength\parfillskip{0pt plus .4\textwidth}
\setlength\emergencystretch{.1\textwidth}
\clubpenalty10000
\widowpenalty10000
\displaywidowpenalty=10000

\newif\ifEnableExtend
\EnableExtendtrue

\newif\ifDoubleBlind
\DoubleBlindfalse

\usepackage{algorithm}
\usepackage[noend]{algpseudocode}
\usepackage{graphicx}
\usepackage{numprint}
\usepackage[section]{placeins}
\usepackage{subcaption}

\let\oldReturn\Return
\renewcommand{\Return}{\State\oldReturn}

\title{Finding All Global Minimum Cuts In Practice}

\ifDoubleBlind
\author{Double Blind}{Double Blind University}{}{}{}
\else\author{Monika Henzinger}{University of Vienna, Faculty of Computer Science, Vienna, Austria}{monika.henzinger@univie.ac.at}{0000-0002-5008-6530}{}
\author{Alexander Noe}{University of Vienna, Faculty of Computer Science, Vienna, Austria}{alexander.noe@univie.ac.at}{0000-0002-4711-3323}{}
\author{Christian Schulz}{University of Vienna, Faculty of Computer Science, Vienna, Austria}{christian.schulz@univie.ac.at}{0000-0002-2823-3506}{}
\author{Darren Strash}{Hamilton College, Clinton, NY, USA}{dstrash@hamilton.edu}{0000-0001-7095-8749}{}
\fi{}

\nolinenumbers

\hideLIPIcs  

\EventEditors{John Q. Open and Joan R. Access}
\EventNoEds{2}
\EventLongTitle{42nd Conference on Very Important Topics (CVIT 2016)}
\EventShortTitle{CVIT 2016}
\EventAcronym{CVIT}
\EventYear{2016}
\EventDate{December 24--27, 2016}
\EventLocation{Little Whinging, United Kingdom}
\EventLogo{}
\SeriesVolume{42}
\ArticleNo{23}

\ccsdesc[500]{Mathematics of computing~Paths and connectivity problems}
\ccsdesc[500]{Mathematics of computing~Graph algorithms}
\ccsdesc[100]{Mathematics of computing~Network flows}

\keywords{Minimum Cut, Graph Algorithm, Algorithm Engineering, Cut Enumeration, Balanced Cut, Global Minimum Cut, Large-scale Graph Analysis}

\category{}

\relatedversion{}

\authorrunning{M. Henzinger, A. Noe, C. Schulz and D. Strash}
\Copyright{Monika Henzinger, Alexander Noe, Christian Schulz and Darren Strash}

\ifDoubleBlind

\else
\funding{
  The research leading to these results has received funding from the European Research Council under the
European Community's Seventh Framework Programme (FP7/2007-2013) /ERC grant agreement No. 340506. \newline Partially supported by DFG grant SCHU 2567/1-2.}
\fi{}

\begin{document}

\maketitle

\begin{abstract}
We present a practically efficient algorithm that finds all global minimum cuts in huge undirected graphs. Our algorithm uses a multitude of kernelization rules to reduce the graph to a small equivalent instance and then finds all minimum cuts using an optimized version of the algorithm of Nagamochi, Nakao and Ibaraki. In shared memory we are able to find all minimum cuts of graphs with up to billions of edges and millions of minimum cuts in a few minutes. 
\ifEnableExtend We also give a new linear time algorithm to find the most balanced minimum cuts given as input the representation of all minimum cuts.\fi{}
\end{abstract}

\ifDoubleBlind
\newpage
\setcounter{page}{1}
\fi

\section{Introduction}

We consider the problem of finding \emph{all minimum cuts} of an undirected network where edges are weighted by positive integers. A \emph{minimum cut} in a graph is a partition of the vertices into two sets so that the total weight of edges crossing the boundary between the blocks is minimized. The problem of finding all minimum cuts has applications in many fields. In particular, minimum cuts in similarity graphs can be used to find clusters~\cite{wu1993optimal,hartuv2000clustering}. As the minimum cut is often highly skewed, a variety of techniques to find more balanced bipartitions with small cuts were developed~\cite{ding2001min,hagen1992new,shi2000normalized}. However, these balanced variations of the problem are generally NP-complete. In contrast to that, we find all minimum cuts in practice in a similar timescale than finding an arbitrary minimum cut and can thus output the minimum cut that is least skewed. In community detection, the absence of a small cut inside a cluster can indicate a likely community in a social network~\cite{cai2005mining}.
Other applications for finding all minimum cuts can be found in network reliability~\cite{karger2001randomized,ramanathan1987counting}, where a minimum cut in a network has the highest risk to disconnect the network; in VLSI design~\cite{krishnamurthy1984improved} and graph drawing~\cite{kant1993algorithms}, in which minimum cuts are used to separate the network; and finding all minimum cuts is an important subproblem for edge-connectivity augmentation algorithms~\cite{gabow1991applications,naor1997fast}.

The problem of finding all minimum cuts is closely related to the \emph{(global) minimum cut problem}, which aims to find \emph{some} minimum cut in the graph. The classical algorithm of Gomory and Hu~\cite{gomory1961multi} solves the minimum cut problem by solving $n$ minimum-$s$-$t$ cut problems. Using the push-relabel algorithm~\cite{goldberg1988new} this results in a total running time of $\Oh{n^2m\log{\frac{n^2}{m}}}$. Nagamochi~\etal\cite{nagamochi1992computing,nagamochi1994implementing} give an algorithm for the minimum cut problem, which is based on edge contractions instead of maximum flows. Their algorithm has a worst case running time of $\Oh{nm+n^2\log{n}}$ but performs far better in practice on many graph classes~\cite{Chekuri:1997:ESM:314161.314315,junger2000practical,henzinger2018practical}. Henzinger~\etal\cite{henzinger2019shared} give a fast shared-memory parallel algorithm for the minimum cut problem based on their algorithm which finds a minimum cut very fast in practice.

Even though a graph can have up to $\binom{n}{2}$ minimum cuts~\cite{karger2000minimum}, there is a compact representation of all minimum cuts of a graph called \emph{cactus graph} with $\Oh{n}$ vertices and edges. A cactus graph is a graph in which each edge belongs to at most one simple cycle. Karzanov and Timofeev~\cite{karzanov1986efficient} give the first polynomial time algorithm to construct the cactus representation for all minimum cuts. Picard and Queyranne~\cite{picard1980structure} show that all minimum cuts separating two specified vertices can be found from a maximum flow between them. Thus, similar to the classical algorithm of Gomory and Hu~\cite{gomory1961multi} for the minimum cut problem, we can find all minimum cuts in $n-1$ maximum flow computations. The algorithm of Karzanov and Timofeev~\cite{karzanov1986efficient} combines all those minimum cuts into a cactus graph representing all minimum cuts. Nagamochi and Kameda~\cite{nagamochi1994canonical} give a representation of all minimum cuts separating two vertices $s$ and $t$ in a so-called $(s,t)$-cactus representation. Based on this $(s,t)$-cactus representation, Nagamochi~\etal\cite{nagamochi2000fast} give an algorithm that finds all minimum cuts and gives the minimum cut cactus in $\Oh{nm + n^2 \log{n} + n^*m\log{n}}$, where $n^*$ is the number of vertices in the cactus.

Karger and Stein~\cite{karger1996new} give a randomized algorithm to find all minimum cuts in $\Oh{n^2 \log^3{n}}$ time by contracting random edges. Based on the algorithm of Karzanov and Timofeev~\cite{karzanov1986efficient} and its parallel variant given by Naor and Vazirani~\cite{naor1991representing} they show how to give the cactus representation of the graph in the same asymptotic time. Ghaffari~\etal\cite{ghaffari2019faster} recently gave an algorithm that finds all \emph{non-trivial minimum cuts} of a simple unweighted graph in $\Oh{m \log^2{n}}$ time. Using the techniques of Karger and Stein the algorithm can trivially give the cactus representation of all minimum cuts in $\Oh{n^2 \log{n}}$. 
While there are multiple implementations of the algorithm of Karger and Stein~\cite{Chekuri:1997:ESM:314161.314315,gianinazzi2018communication,henzinger2018practical} for the minimum cut problem, to the best of our knowledge there are no published implementations of either of the algorithms to find the cactus graph representing all minimum cuts\ifEnableExtend (with or without data reduction techniques)\fi{}.

\ifEnableExtend
In the last two decades significant advances in FPT algorithms have been made: an NP-hard graph problem is fixed-parameter tractable (FPT) if large inputs can be solved efficiently and provably optimally, as long as some problem parameter is small. This has resulted in an algorithmic toolbox that are by now well-established. 
Few of the new techniques are implemented and tested on real datasets, and their practical potential is far from understood. 
However, recently the engineering part in area has gained some momentum. 
There are several experimental studies in the area that take up ideas from FPT or kernelization theory, e.g.~for independent sets (or equivalently vertex cover)~\cite{akiba-tcs-2016,DBLP:conf/sigmod/ChangLZ17,dahlum2016accelerating,DBLP:conf/alenex/Lamm0SWZ19,DBLP:journals/corr/abs-1908-06795,DBLP:conf/alenex/Hespe0S18}, for cut tree construction\cite{DBLP:conf/icdm/AkibaISMY16}, for treewidth computations \cite{bannach_et_al:LIPIcs:2018:9469,DBLP:conf/esa/Tamaki17,koster2001treewidth}, for the feedback vertex set problem \cite{DBLP:conf/wea/KiljanP18,DBLP:conf/esa/FleischerWY09}, for the dominating set problem~\cite{10.1007/978-3-319-55911-7_5}, for the minimum cut~\cite{henzinger2019shared,henzinger2018practical}, for the multiterminal cut problem~\cite{henzinger2019sharedmemory}, for the maximum cut problem~\cite{DBLP:journals/corr/abs-1905-10902} and for the cluster editing problem~\cite{Boecker2011}.
Recently, this type of data reduction techniques is also applied for problem in P such as matching \cite{DBLP:conf/esa/KorenweinNNZ18},
\fi{}

\paragraph*{Our Results}

We give an algorithm that finds all minimum cuts on very large graphs. Our algorithm is based on the recursive algorithm of Nagamochi~\etal\cite{nagamochi2000fast}. We combine the algorithm with a multitude of techniques to find edges that can be contracted without affecting any minimum cut in the graph. Some of these techniques are adapted from techniques for the global minimum cut problem~\cite{padberg1991branch,nagamochi1994implementing}. Using these and newly developed reductions we are able to decrease the running time by up to multiple orders of magnitude compared to the algorithm of Nagamochi~\etal\cite{nagamochi2000fast} and are thus able to find all minimum cuts on graphs with up to billions of edges in a few minutes. Based on the cactus representation of all minimum cuts, we are able to find the most balanced minimum cut in time linear to the size of the cactus. As our techniques are able to find the most balanced minimum cut\ifEnableExtend (and all others too)\fi{} of graphs with billions of edges in mere minutes, this allows the use of minimum cuts as a subroutine in sophisticated data mining and graph analysis tools.

\section{Basic Concepts}

Let $G = (V, E, c)$ be a weighted undirected simple graph with vertex set $V$, edge set $E \subset V \times V$ and
non-negative edge weights $c: E \rightarrow \MdN$. 
We extend $c$ to a set of edges $E' \subseteq E$ by summing the weights of the edges; that is, let $c(E')\Is \sum_{e=(u,v)\in E'}c(u,v)$ and let $c(u)$ denote the sum of weights of all edges incident to vertex $v$.
Let $n = |V|$ be the
number of vertices and $m = |E|$ be the number of edges in $G$. The \emph{neighborhood}
$N(v)$ of a vertex $v$ is the set of vertices adjacent to $v$. The \emph{weighted degree} of a vertex is the sum of the weights of its incident edges. For brevity, we simply call this the \emph{degree} of the vertex.
For a set of vertices $A\subseteq V$, we denote by $E[A]\Is \{(u,v)\in E \mid u\in A, v\in V\setminus A\}$; that is, the set of edges in $E$ that start in $A$ and end in its complement.
A cut $(A, V
\setminus A)$ is a partitioning of the vertex set $V$ into two non-empty
\emph{partitions} $A$ and $V \setminus A$, each being called a \emph{side} of the cut. The \emph{capacity} or \emph{weight} of a cut $(A, V
\setminus A)$ is $c(A) = \sum_{(u,v) \in E[A]} c(u,v)$.
A \emph{minimum cut} is a cut $(A, V
\setminus A)$ that has smallest capacity $c(A)$ among all cuts in $G$. We use $\lambda(G)$ (or simply
$\lambda$, when its meaning is clear) to denote the value of the minimum cut
over all $A \subset V$. For two vertices $s$ and $t$, we denote $\lambda(G,s,t)$ as the capacity of the smallest cut of $G$, where $s$ and $t$ are on different sides of the cut. $\lambda(G,s,t)$ is also known as the \emph{minimum s-t-cut} of the graph. If all edges have weight $1$, $\lambda(G,s,t)$ is also called the \emph{connectivity} of vertices $s$ and $t$. The connectivity $\lambda(G,e)$ of an edge $e=(s,t)$ is defined as $\lambda(G,s,t)$, the connectivity of its incident vertices. At any point in the execution of a minimum cut algorithm,
$\hat\lambda(G)$ (or simply $\hat\lambda$) denotes the smallest upper bound of the
minimum cut that the algorithm discovered until that point. 
For a vertex $u \in V$ with minimum vertex degree, the size of the \emph{trivial cut} $(\{u\}, V\setminus \{u\})$ is equal to the vertex degree of $u$.
Many algorithms tackling the minimum cut problem use \emph{graph contraction}.
Given
an edge $e = (u, v) \in E$, we define $G/(u, v)$ (or $G/e$) to be the graph after \emph{contracting
edge} $(u, v)$. In the contracted graph, we delete vertex $v$ and all edges
incident to this vertex. For each edge $(v, w) \in E$, we add an edge $(u, w)$
with $c(u, w) = c(v, w)$ to~$G$ or, if the edge already exists, we give it the edge
weight $c(u,w) + c(v,w)$.

A graph with $n$ vertices can have up to $\Omega(n^2)$ minimum cuts~\cite{karger2000minimum}. To see that this bound is tight, consider an unweighted cycle with $n$ vertices. Each set of $2$ edges in this cycle is a minimum cut of $G$. This yields a total of $\binom{n}{2}$ minimum cuts. 
However, all minimum cuts can be represented by a cactus graph $C_G$ with up to $2n$ vertices and $\Oh{n}$ edges~\cite{nagamochi2000fast}. A cactus graph is a connected graph, in which any two simple cycles have at most one vertex in common. In a cactus graph, each edge belongs to at most one simple cycle. 

To represent all minimum cuts of a graph $G$ in an edge-weighted cactus graph $C_G = (V(C_G), E(C_G))$, each vertex of $C_G$ represents a possibly empty set of vertices of $G$ and each vertex in $G$ belongs to the set of one vertex in $C_G$. Let $\Pi$ be a function that assigns to each vertex of $C_G$ it set of vertices of $G$. Then every cut $(S, V(C_G) \backslash S)$ corresponds to a minimum cut $(A, V \backslash A)$ in $G$ where $A=\cup_{x\in S} \Pi(x)$. In $C_G$, all edges that do not belong to a cycle have weight $\lambda$ and all cycle edges have weight $\frac{\lambda}{2}$. A minimum cut in $C_G$ consists of either one tree edge or two edges of the same cycle. We denote by $n^*$ the number of vertices in $C_G$ and $m^*$ the number of edges in $C_G$. The weight $c(v)$ of a vertex $v \in C_G$ is equal to the number of vertices in $G$ that are assigned to $v$.

\section{Algorithm Description}

Our algorithm combines a variety of techniques \ifEnableExtend and algorithms\fi{} in order to find all minimum cuts in a graph. The algorithm is based on the contractions of edges which cannot be part of a minimum cut. Thus, we first show that an edge $e$ that is not part of any minimum cut in graph $G$ can be contracted and all minimum cuts of $G$ remain in the resulting graph $G/e$.

\begin{lemma} \label{lem:contractible}
If an edge $e=(u,v)$ is not part of any minimum cut in graph $G$, all minimum cuts of $G$ remain in the resulting graph $G/e$.
\end{lemma}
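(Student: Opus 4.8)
The plan is to exhibit a capacity-preserving correspondence between those cuts of $G$ that do not separate $u$ and $v$ and the cuts of $G/e$, and then to combine it with the elementary fact that contracting an edge can never decrease the minimum cut value.

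First I would set up the correspondence. Following the definition of $G/e$ in the excerpt, the contracted graph has vertex set $V\setminus\{v\}$, the edge $e$ and all other edges incident to $v$ are removed, and each former neighbor $x$ of $v$ contributes weight $c(v,x)$ to the edge $(u,x)$ of $G/e$ (creating it if it did not exist, adding to its weight otherwise). Given any cut $(A,V\setminus A)$ of $G$ with $\{u,v\}\subseteq A$ (the case $\{u,v\}\subseteq V\setminus A$ is symmetric), let $A'\Is A\setminus\{v\}$; this defines a valid cut of $G/e$, since $u\in A'$ and $V\setminus A$ is a nonempty subset of $(V\setminus\{v\})\setminus A'$. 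I claim the capacity of this cut in $G/e$ equals the capacity $c(A)$ in $G$. To see this, classify the edges of $G$ crossing $(A,V\setminus A)$: $e$ is not among them, since both its endpoints lie in $A$; a crossing edge disjoint from $\{u,v\}$ is unchanged by contraction and still crosses in $G/e$; and a crossing edge meeting $\{u,v\}$ has its other endpoint $x$ outside $A$, so the (at most two) parallel contributions $c(u,x)$ and $c(v,x)$ present in $G$ are precisely what the contraction rule sums into the single edge $(u,x)$ of $G/e$. Summing over all crossing edges, the two capacities agree; running the same bookkeeping backwards shows that every cut of $G/e$ arises in this way from a unique cut of $G$ that keeps $u$ and $v$ on the same side, again with equal capacity.

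Next I would apply the correspondence twice. Since every cut of $G/e$ comes from a cut of $G$ of the same capacity, $\lambda(G/e)\ge\lambda(G)$. Now let $(A,V\setminus A)$ be an arbitrary minimum cut of $G$. By hypothesis $e$ is part of no minimum cut, so $e$ does not cross $(A,V\setminus A)$, which means $u$ and $v$ lie on the same side; by the previous paragraph this cut has an image in $G/e$ of capacity $c(A)=\lambda(G)$. Combined with $\lambda(G/e)\ge\lambda(G)$, this gives $\lambda(G/e)=\lambda(G)$ and shows the image is a minimum cut of $G/e$. Hence every minimum cut of $G$ survives, as a minimum cut, in $G/e$, which is exactly the statement.

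The only real work is the edge-weight bookkeeping in the first step, and the one point that needs care is the formation of parallel edges: when $G$ contains both $(u,x)$ and $(v,x)$, they collapse to a single edge $(u,x)$ in $G/e$, and one must invoke that contraction \emph{adds} their weights rather than discarding one. Everything else is routine, and the argument uses nothing beyond the hypothesis; in particular it does not rely on integrality of the edge weights.
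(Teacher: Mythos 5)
Your proof is correct and rests on the same central observation as the paper's: since $e$ lies in no minimum cut, every minimum cut $(A, V\setminus A)$ of $G$ keeps $u$ and $v$ on the same side, and such a cut survives contraction of $e$. The paper's proof stops essentially there, asserting that the cut ``is still the case in $G/e$''; you go further and verify the two facts that assertion quietly relies on, namely that the contraction preserves the cut's capacity (via the edge-weight bookkeeping, including the collapse of parallel edges $(u,x)$ and $(v,x)$), and that $\lambda(G/e)\geq\lambda(G)$ via the reverse direction of the correspondence, so that the surviving cut is genuinely a \emph{minimum} cut of $G/e$ and not merely a cut of weight $\lambda(G)$. So this is the same route, carried out more completely than the paper's terse version; the extra rigor buys a self-contained argument, at the cost of some standard bookkeeping the paper leaves implicit.
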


\begin{proof}
  Let $(A,B)$ be an arbitrary minimum cut of $G$. For an edge $e = (u,v)$, which is not part of any minimum cut, we know that $e \not\in E[A]$, so either $u$ and $v$ are both in vertex set $A$ or both in vertex set $B$. This is still the case in $G/e$. Thus, the edge $e$ can be contracted even if we aim to find every minimum cut of $G$.
\end{proof}

Lemma~\ref{lem:contractible} is very useful to reduce the size of the graph by a multitude of techniques to identify such edges. We first give a short overview of our algorithm and then explain the techniques in more detail. First, we use the shared-memory parallel heuristic minimum cut algorithm VieCut~\cite{henzinger2018practical} in order to find an upper bound $\hat\lambda$ for the minimum cut which is very likely to be the correct value.\ifEnableExtend VieCut is a multilevel algorithm that uses the label propagation algorithm to contract heavily connected clusters.\fi{} Having a tight bound for the minimum cut allows the contraction of many edges, as multiple reduction techniques depend on the value of the minimum cut. We adapt contraction techniques originally developed by Nagamochi~\etal\cite{nagamochi1992computing,nagamochi1994implementing} and Padberg~\etal\cite{padberg1991branch} to the problem of finding all minimum cuts. Section~\ref{ss:contraction} explains these contraction routines. On the resulting graph we find all minimum cuts using an optimized variant of the algorithm of Nagamochi, Nakao and Ibaraki~\cite{nagamochi2000fast} and return the cactus graph which represents them all. A short description of the algorithm and an explanation of our engineering effort are given in Section~\ref{ss:cactus}. Afterwards, in Section~\ref{ss:combine} we show how we combine the parts into a fast algorithm to find all minimum cuts of large networks. 

\subsection{Edge Contraction}
\label{ss:contraction}

As shown in Lemma~\ref{lem:contractible}, edges that are not part of any minimum cut can be safely contracted. We build a set of techniques that aim to find contractible edges and run these in alternating order until neither of them finds any more contractible edges. We now give a short introduction to these.

For efficiency, we perform contractions in bulk. If our algorithm finds an edge that can be contracted, we merge the incident vertices in a thread-safe union-find data structure~\cite{anderson1991wait}. After each run of a contraction technique that finds contractible edges, we create the contracted graph using a shared-memory parallel hash table~\cite{maier2016concurrent}. In this contracted graph, each set of vertices of the original graph is merged into a single node. The contraction of this vertex set is equivalent to contracting a spanning tree of the set. After contraction we check whether a vertex in the contracted graph has degree $< \hat\lambda$. If it does, we found a cut of smaller value and update $\hat\lambda$ to this value.

\subsubsection{Connectivity-based Contraction}
\label{sss:connectivity}

The connectivity of an edge $e=(s,t)$ is the weight of the minimum cut that separates $s$ and $t$, \ie the \emph{minimum s-t-cut}. For an edge that has connectivity $> \hat\lambda$, we thus know that there is no cut separating $s$ and $t$ (\ie no cut that contains $e$) that has value $\leq \hat\lambda$. Thus, we know that there cannot be a minimum cut that contains $e$, as $\hat\lambda$ is by definition at least as large as $\lambda$. However, solving the minimum s-t-cut problem takes significant time, so computing the connectivity of each edge does not scale to large networks. 
Hence, as part of their algorithm for the global minimum cut problem, Nagamochi~\etal\cite{nagamochi1992computing,nagamochi1994implementing} give an algorithm that computes a lower bound $q(e)$ for the connectivity of every edge $e$ of $G$ in a total running time of $\Oh{m + n\log{n}}$. Each of the edges whose connectivity lower bound is already larger than $\hat\lambda$ can be contracted as it cannot be part of any minimum cut. Their algorithm builds \emph{edge-disjoint maximum spanning forests} and contracts all edges that are not in the first $\lambda - 1$ spanning forests, as those connect vertices that have connectivity at least $\lambda$~\cite{henzinger2019shared}. This is possible as the incident vertices of any such edge $e$ are connected in each of the first $\lambda - 1$ spanning forests and by $e$ and thus have a connectivity of at least $\lambda$. In other words, there can not be any cut smaller than $\lambda$ which contains $e$.

 Henzinger~\etal\cite{henzinger2019shared} give a fast shared-memory parallel variant of their algorithm. As both of these algorithms only aim to find a single minimum cut, they also contract edges that have connectivity equal to $\hat\lambda$, as they only want to see whether there is a cut better than the best cut known previously. As we want to find all minimum cuts, we can only contract edges whose connectivity is strictly larger than $\hat\lambda$. Nagamochi~\etal could prove that at least one edge has value $\hat\lambda$ in their routine and can thus be contracted. We do not have such a guarantee when trying to find edges that have connectivity $> \hat\lambda$. Consider for example an unweighted tree, whose minimum cut has a value of $1$ and each edge has connectivity $1$ as well.

 \subsubsection{Local Contraction Criteria}
 \label{sss:local}
 
 \begin{figure*}[t!]
   \centering
   \includegraphics[width=.8\textwidth]{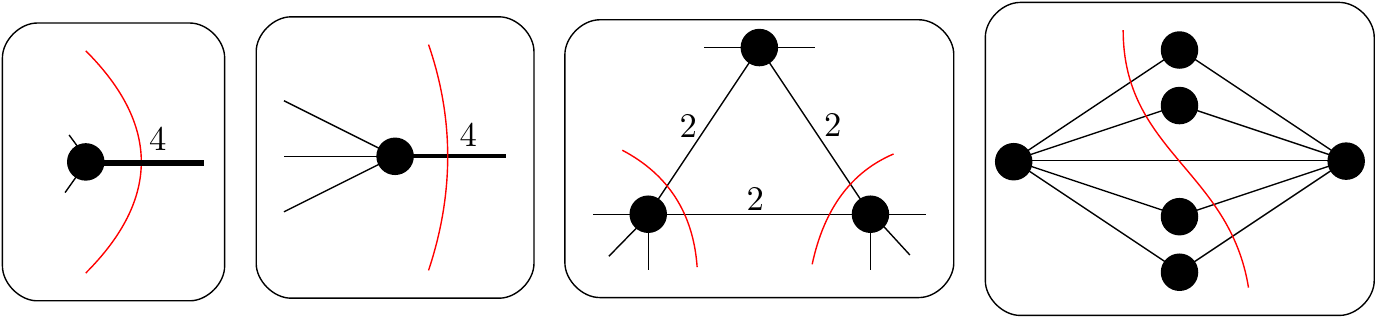}
   \caption{Contraction: (1) \prOne, (2) \prTwo, (3) \prThree, (4) \prFour}
   \label{fig:local}
 \end{figure*}

 Padberg and Rinaldi~\cite{padberg1991branch} give a set of \emph{local reduction} routines which determine whether an edge can be contracted without affecting the minimum cut. Their reductions routines were shown to be very useful in order to find a minimum cut fast in practice~\cite{Chekuri:1997:ESM:314161.314315,junger2000practical,henzinger2018practical}. We adapt the routines \ifEnableExtend originally developed for the minimum cut problem so that they hold for the problem of \fi{} for finding all minimum cuts. Thus, we have to make sure that we do not contract cuts of value $\hat\lambda$, as they might be minimal and additionally make sure that we do not contract edges incident to vertices that could have a \emph{trivial minimum cut}, \ie a minimum cut, where one side contains only a single vertex. Figure~\ref{fig:local} depicts the contraction routines and Lemma~\ref{lem:local_crit} gives a more formal definition of them.

 \begin{lemma} \label{lem:local_crit}
    For an edge $e = (u,v) \in E$, $e$ is not part of any minimum cut, if $e$ fulfills at least one of the following criteria. Thus,  all minimum cuts of $G$ are still present in $G/e$ and $e$ can be contracted.
    \begin{enumerate}
      \item \prOne: $c(e) > \hat\lambda$
      \item \prTwo:
      \begin{itemize}
        \item $c(v) < 2 c(e)$ and $c(v) > \hat\lambda$, or
        \item $c(u) < 2 c(e)$ and $c(u) > \hat\lambda$
      \end{itemize}      
      \item \prThree:\\
        $\exists w \in V$ with 
      \begin{itemize}
        \item $c(v) < 2 \{c(v,w) + c(e)\}$ and $c(v) > \hat\lambda$, and
        \item $c(u) < 2 \{c(u,w) + c(e)\}$ and $c(u) > \hat\lambda$
      \end{itemize}
      \item \prFour: \\$c(e) + \sum_{w \in V} min\{c(v,w),c(u,w)\} > \hat\lambda$
    \end{enumerate}
 \end{lemma}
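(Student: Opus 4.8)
The plan is to reduce everything to Lemma~\ref{lem:contractible}: it suffices to show that under each of the four criteria the edge $e=(u,v)$ lies in no minimum cut of $G$. So for each criterion I would assume for contradiction that $(A,B)$ is a minimum cut with $e$ crossing it --- say $u\in A$ and $v\in B$, which we may assume since every one of the criteria is symmetric in $u$ and $v$ --- and derive either $c(A)>\hat\lambda\ge\lambda$ (impossible, as $(A,B)$ is minimum and $\hat\lambda\ge\lambda$ by definition) or a direct violation of the stated inequality on a vertex degree. For \prOne this is immediate: $e$ crosses the cut, so $c(A)\ge c(e)>\hat\lambda$.

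For \prTwo and \prThree the common tool is the ``vertex shift'' argument of Padberg and Rinaldi. Let $x$ be the total weight of edges from $v$ to $A$ and $y$ the total weight of edges from $v$ to $B\setminus\{v\}$; since $G$ is simple, $c(v)=x+y$, and $x\ge c(e)$ because $e$ joins $v$ to $A$. If $B=\{v\}$, then $(A,B)$ is the trivial cut at $v$, so $c(A)=c(v)>\hat\lambda$ --- this is precisely where the extra hypotheses $c(v)>\hat\lambda$ (respectively $c(u)>\hat\lambda$) are used. Otherwise $(A\cup\{v\},\,B\setminus\{v\})$ is a genuine cut of weight $c(A)-x+y\ge\lambda=c(A)$, which forces $y\ge x$ and hence $c(v)=x+y\ge 2x$. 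For \prTwo, $x\ge c(e)$ gives $c(v)\ge 2c(e)$, contradicting $c(v)<2c(e)$; if instead the second bullet of \prTwo holds, the same argument applies after shifting $u$ across. For \prThree I would split on where the witness $w$ lies: if $w\in A$, then $e$ and $(v,w)$ are distinct edges from $v$ into $A$, so $x\ge c(e)+c(v,w)$ and thus $c(v)\ge 2\{c(e)+c(v,w)\}$, contradicting the first bullet; if $w\in B$, the mirror argument shifts $u$ and uses $c(u)<2\{c(u,w)+c(e)\}$ with $c(u)>\hat\lambda$. (The degenerate cases $w\in\{u,v\}$ need no separate treatment, since there the hypothesis of \prThree already implies that of \prTwo.)

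For \prFour I would argue directly, charging the cut weight to pairwise distinct edges. The edge $e$ crosses $(A,B)$, contributing $c(e)$; and for every other vertex $w$ exactly one of $(u,w)$, $(v,w)$ crosses the cut --- according to whether $w\in B$ or $w\in A$ --- contributing at least $\min\{c(u,w),c(v,w)\}$. As all of these edges are pairwise distinct, $c(A)\ge c(e)+\sum_{w\in V}\min\{c(v,w),c(u,w)\}$ (the terms with $w\in\{u,v\}$ vanish because $G$ has no self-loops), and this exceeds $\hat\lambda$ by assumption.

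No step here is deep; the main care is in the boundary behaviour. The error-prone points will be: checking that the shifted partition $(A\cup\{v\},B\setminus\{v\})$ really is a cut, i.e.\ that both of its sides are nonempty, before comparing its weight to $\lambda$ --- this is what forces the separate treatment of a singleton side and thereby explains the degree conditions attached to \prTwo and \prThree, the ``trivial minimum cut'' caveat noted just before the lemma --- and, in \prFour, making sure that the edges to which we charge weight are genuinely pairwise distinct, so that their total weight is a valid lower bound on $c(A)$.
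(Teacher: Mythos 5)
Your proof is correct and follows the same strategy as the paper's: reduce to Lemma~\ref{lem:contractible}, handle \prOne{} by weight, use a Padberg--Rinaldi vertex-shift argument for \prTwo{} and \prThree{} (with the $c(\cdot)>\hat\lambda$ hypotheses ruling out the singleton-side case), and charge to pairwise distinct crossing edges for \prFour{}. Your write-up is in fact more careful than the paper's --- it makes the $x,y$ decomposition explicit, spells out why the shifted partition must be a genuine cut, and checks the degenerate $w\in\{u,v\}$ cases in \prThree{} --- but it is the same underlying argument.
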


\begin{proof}
  \begin{enumerate}
    \item If $c(e) > \hat\lambda$, every cut that contains $e$ has capacity $> \hat\lambda$. Thus it can not be a minimal cut.
    \item Without loss of generality let $v$ be the vertex in question. The condition $c(v) < 2 c(e)$ means that $e$ is heavier than all other edges incident to $v$ combined. Thus, for any non-trivial cut that contains $e$, we can find a lighter cut by replacing $e$ with all other incident edges to $v$, \ie moving $v$ to the other side of the cut. As this is not true for the trivial minimum cut $(v,V \setminus v)$, we cannot contract an edge incident to a vertex that has weight $\leq \hat\lambda$.
    \item  This condition is similar to (2). Let there be a triangle ${u,v,w}$ in the graph in which it holds for both $u$ and $v$ that the two incident triangle edges are heavier than the sum of all other incident edges. Then, every cut that separates $u$ and $v$ can be improved by moving $u$ and $v$ into the same side. As the cut could have vertex $w$ on either side, both vertices need to fulfill this condition. To make sure that we do not contract any trivial minimum cut, we check that both $v$ and $u$ have weight $> \hat\lambda$ and thus can not represent a trivial minimum cut. 
    \item In this condition we check the whole shared neighborhood of vertices $u$ and $v$. Every cut that separates $u$ and $v$ must contain $e$ and for each shared neighbor $w$ at least one of the edges connecting them to $w$. Thus, we sum over the lighter edge connecting them to the shared neighbors and have a lower bound of the minimum cut that separates $u$ and $v$. If this is heavier than $\hat\lambda$, we know that no minimum cut separates $u$ and $v$.
  \end{enumerate}
  \end{proof}

The conditions \prOne{} and \prTwo{} can both be checked for the whole graph in a 
single run in linear time. While we can check condition \prThree{} when summing up the 
lighter incident edges for condition \prFour{}, exhaustively checking all triangles incurs 
a strictly worse than linear runtime, as a graph can have up to $\Theta(m^{3/2})$ 
triangles~\cite{schank2005finding}. Thus, we only perform linear-time runs as 
developed by Chekuri~\etal\cite{Chekuri:1997:ESM:314161.314315} by marking the 
neighborhood of $u$ and $v$ while we check the conditions and do not perform the test on
marked vertices.

\subsubsection{Vertices with one Neighbor}
\label{sss:degreeone}

Over the run of the algorithm, we occasionally encounter vertices that have only a single neighbor. Let $v$ be this vertex with one neighbor and $e = (v,w)$ be the only incident edge. As we update $\hat\lambda$ to the minimum degree whenever we perform a bulk edge contraction, $c(e) \geq \hat\lambda$: for an edge whose weight is $> \hat\lambda$, condition \prOne{} will contract it. For an edge whose weight is $\hat\lambda$, the edge represents a trivial minimum cut iff $\hat\lambda = \lambda$. This is the only minimum cut that contains $e$, as every non-trivial cut containing $e$ has higher weight. Thus, we can contract $e$ for now and remember that it was contracted. If $\hat\lambda$ is decreased, we can forget about these vertices as the cuts are not minimal. When we are finished, we can re-insert all contracted vertices that have a trivial minimum cut. We perform this reinsertion in a bottom-up fashion (\ie in reverse order to how they were contracted), as the neighbor $w$ could be contracted in a later contraction. 

\subsection{Finding all Minimum Cuts}
\label{ss:cactus}

We apply the reductions in the previous section exhaustively until they are not able to find a significant number of edges to contract. On the remaining graph we aim to find the cactus representation of all minimum cuts. Our algorithm for this purpose is based on the algorithm of Nagamochi, Nakao and Ibaraki~\cite{nagamochi2000fast}. While there is a multitude of algorithms for the problem of finding all minimum cuts, to the best of our knowledge there are no implementations accessible to the public and there is no practical experimentation on finding all minimum cuts. We base our algorithm on the algorithm of Nagamochi, Nakao and Ibaraki~\cite{nagamochi2000fast}, as their algorithm allows us to run the reduction routines previously detailed in between recursion steps.

We give a quick sketch of their algorithm, for details we refer the reader to \cite{nagamochi2000fast}. To find all minimum cuts in graph $G$, the algorithm chooses an edge $e=(s,t)$ in $G$ and uses a maximum flow $f$ to find the minimum s-t-cut $\lambda(s,t)$. If $\lambda(s,t) > \lambda$ there is no minimum cut that separates $s$ and $t$ and thus $e$ can be contracted. If $\lambda(s,t) = \lambda$, the edge is part of at least one minimum cut. They show that the strongly connected components $(V_1,\dots,V_k)$ of the residual graph $G_f$ represent all minimum cuts that contain $e$ (and potentially some more). For each connected component $V_i$, they build a graph $C_i$, in which all other connected components are contracted into a single vertex. We recurse on these component subgraphs and afterwards combine the minimum cut cactus graphs of the recursive calls to a cactus representation for $G$. 

The combination of the cactus graphs begins by building a cactus graph $C$ representing the set of strongly connected components, in which each $V_i$ is represented by a single vertex $v_i$. Each cactus $C_i$ is then merged with $C$ by replacing $v_i$ with $C_i$. For details we refer the reader to~\cite{nagamochi2000fast}.

As the contraction routines in Section~\ref{ss:contraction} usually mark a large amount of edges that can be contracted in bulk, we represent the graph in the compressed sparse row format\ifEnableExtend~\cite{tewarson1973sparse}\fi{}. This allows for fast and memory-efficient accesses to vertices and edges, however, we need to completely rebuild the graph in each bulk contraction and also keep vertex information about the whole graph hierarchy to be able to see which vertices in the original graph are encompassed in a vertex in a coarser vertex and to be able to re-introduce the cactus edges that were removed. While this is efficient for the bulk contractions performed in the previous section, in this section we often perform single-edge contractions or contract a small block of vertices. For fast running times these operations should not incur a complete rebuild of the graph data structure. We therefore use a mutable adjacency list data structure where each vertex is represented by a dynamic array of edges to neighboring vertices. Each edge stores its weight, target and the ID of its reverse edge (as we look at undirected graphs). This allows us to contract edges and small blocks in time corresponding to the sum of vertex degrees. For each vertex in the original graph, we store information which vertex currently encompasses it and every vertex keeps a list of currently encompassed vertices of the original graph. This information is updated during each edge contraction.
Inside this algorithm we re-run the contraction routines of Section~\ref{ss:contraction}. As they incur some computational cost and the graph does not change too much over different recursion steps, we only run the contraction routines every $10$ recursion steps.

\subsubsection{Edge Selection}

The recursive algorithm of Nagamochi, Nakao and Ibaraki~\cite{nagamochi2000fast} selects an arbitrary edge for the maximum flow problem in each recursion step. If this edge has connectivity equal to the minimum cut, we create a recursive subproblem for each connected component of the residual graph. In order to reduce the graph size - and thus the amount of work necessary - quickly, we aim to select edges in which the largest connected component of the residual graph is as small as possible. The edge selection strategy \texttt{Heavy} searches for the highest degree vertex $v$ and chooses the edge from $v$ to its highest degree neighbor. The strategy \texttt{WeightedHeavy} does the same, but uses the vertices whose weighted degree is highest. The idea is that an edge between high-degree vertices is most likely 'central' to the graph and thus manages to separate sizable chunks from the graph. The edge selection strategy \texttt{Central} aims to find a central edge more directly: we aim to find two vertices $u$ and $v$ with a high distance and take the central edge in their shortest paths. We find those vertices by performing a breadth-first search from a random vertex $w$, afterwards performing a breadth-first search from the vertex encountered last. We then take the central edge in the shortest path (as defined from the second breadth-first search) from the two vertices encountered last in the two breadth-first searches. The edge selection strategy \texttt{Random} picks a random edge.

\subsubsection{Additional Reductions}
\label{sss:degreetwo}

Over the course of this recursive contraction-based algorithm, we routinely encounter vertices with just two neighbors. Let $v$ be the vertex in question, which is connected to $u_0$ by edge $e_0$ and to $u_1$ by edge $e_1$. We look at four cases, each looking at whether the weight of $e_0$ being equal to the weight of $e_1$ and $c(v)$ being equal to $\lambda$, both conditions that can be checked in constant time.

$c(e_0) \neq c(e_1)$ and $c(v) > \lambda$: Without loss of generality let $e_0$ be the heavier edge. As $c(v) > \lambda$, the trivial cut $(\{v\}, V \setminus \{v\})$ is not a minimum cut. As by definition no cut in $G$ is smaller than $\lambda$, $\lambda(u_0, u_1) \geq \lambda$. Thus, excluding the path through $v$, they have a connectivity of $\geq \lambda - c(e_1)$ and any cut containing $e_0$ has weight $\geq \lambda - c(e_1) + c(e_0) > \lambda$ and can thus not be minimal. We therefore know that $e_0$ is not part of any minimum cuts and can be contracted according to Lemma~\ref{lem:contractible}.

$c(e_0) \neq c(e_1)$ and $c(v) = \lambda$: Without loss of generality let $e_0$ be the heavier edge. Analogously to the previous case we can show that no nontrivial cut contains $e_0$. In this case, where $c(v) = \lambda$, the trivial cut $(\{v\}, V \setminus \{v\})$ is minimal and therefore should be represented in the cactus graph. For all other minimum cuts that contain $e_1$, we know that $v$ and $u_0$ will be in the same block (as $c(e_0) > c(e_1)$). Thus, $v$ will be represented in the cactus as a leaf incident to $u_0$. We contract $e_0$ calling the resulting vertex $u^*$ and store which vertices of the original graph are represented by $v$. Then we recurse. On return from the recursion we check which cactus vertex now encompasses $u^*$ and add an edge from this vertex to a newly added vertex representing all vertices encompassed by $v$.

$c(e_0) = c(e_1)$ and $c(v) > \lambda$: in this case we are not able to contract any edges without further connectivity information.

$c(e_0) = c(e_1)$ and $c(v) = \lambda$: as $c(v) = \lambda$, the trivial cut $(\{v\}, V \setminus \{v\})$ is minimal. If there are other minimum cuts that contain either $e_0$ or $e_1$ (e.g. that separate $u_0$ and $u_1$), we know that by replacing $e_0$ with $e_1$ (or vice-versa) the cut remains minimal. Such a minimum cut exists iff $\lambda(u_0, u_1) = \lambda$. We contract $e_0$ and remember this decision. As $e_1$ is still in the graph (merged with $(u_0, u_1)$), we are able to find each cut that separates $u_0$ and $u_1$. If none exists, $\lambda(u_0, u_1) > \lambda$ and $u_0$ and $u_1$ will be contracted into a single vertex later in the algorithm. When leaving the recursion, we can thus re-introduce vertex $v$ as a leaf connected to the vertex encompassing $u_0$ and $u_1$. If $u_0$ and $u_1$ are in different vertices after leaving the recursion, there is at least one nontrivial cut that contains $e_1$. We thus re-introduce $v$ as a cycle vertex connected to $u_0$ and $u_1$, each with weight $\frac{\lambda}{2}$, and subtract $\frac{\lambda}{2}$ from $c(u_0, u_1)$.

In three out of the four cases presented here, we are able to contract an edge incident to a degree-two vertex. We can check these conditions in total time $\Oh{n}$ for the whole graph. Over the course of the algorithm, we perform edge contractions and thus routinely encounter vertices whose neighborhood has been contracted and thus have a degree of two. Thus, these reductions are able to reduce the size of the graph significantly even if the initial graph is rather dense and does not have a lot of low degree vertices.

\subsection{Putting it all together}
\label{ss:combine}

\begin{algorithm}[t]
  \caption{Algorithm to find all minimum cuts}\label{alg:overview}
  \begin{algorithmic}[1]
  \Procedure{FindAllMincuts}{$G = (V,E)$}
  \State $\hat\lambda \gets \text{VieCut}(G)$ \cite{henzinger2018practical} 
  \While{not converged}
  \State $(G,D_1, \hat\lambda) \gets \text{contract degree-one vertices}(G, \hat\lambda)$
  \State $(G, \hat\lambda) \gets \text{connectivity-based contraction}(G, \hat\lambda)$
  \State $(G, \hat\lambda) \gets \text{local contraction}(G, \hat\lambda)$
  \EndWhile
  \State $\lambda \gets \text{FindMinimumCutValue}(G)$
  \State $C \gets \text{RecursiveAllMincuts}(G, \lambda)$ (\cite{nagamochi2000fast})
  \State $C \gets \text{reinsert vertices}(C, D_1)$
  \Return $(C, \lambda)$
  \EndProcedure
  \end{algorithmic}
  \end{algorithm}

Algorithm~\ref{alg:overview} gives an overview over our algorithm to find all minimum cuts. Over the course of the algorithm we keep an upper bound $\hat\lambda$ for the minimum cut, initially set to the result of the inexact variant of the \texttt{VieCut} minimum cut algorithm~\cite{henzinger2018practical}. While the \texttt{VieCut} algorithm also offers an exact version~\cite{henzinger2019shared}, we use the inexact version, as it is considerably faster and gives a low upper bound for the minimum cut, usually equal to the minimum cut. As described in Section~\ref{ss:contraction}, we use this bound to contract degree-one vertices, high-connectivity edges and edges whose local neighborhood guarantees that they are not part of any minimum cut. We repeat this process until it is converged, as an edge contraction can cause other edges in the neighborhood to also become safely contractible. 
As this process often incurs a long tail of single edge contractions, we stop if the number of vertices was decreased by less than $1\%$ over a run of all contraction routines. 

We then use the minimum cut algorithm of Nagamochi, Ono and Ibaraki~\cite{nagamochi1992computing,nagamochi1994implementing} on the remaining graph, as the following steps need the correct minimum cut. To find all minimum cuts in the contracted graph, we call our optimized version of the algorithm of Nagamochi~\etal\cite{nagamochi2000fast}, as sketched in Section~\ref{ss:cactus}, and afterwards re-insert all minimum cut edges that were previously deleted. Before each recursive call of the algorithm of Nagamochi~\etal\cite{nagamochi2000fast}, we contract edges incident to degree-one and eligible degree-two vertices. Every $10$ recursion levels we additionally check for connectivity-based edge contractions and local contractions. 

\subsection{Shared-Memory Parallelism}
Algorithm~\ref{alg:overview} employs shared-memory parallelism in every step. When we run the algorithm in parallel, we use the parallel variant of VieCut~\cite{henzinger2018practical}. Local contraction and marking of degree one vertices are parallelized using OpenMP~\cite{dagum1998openmp}. For the first round of connectivity-based contraction, we use the parallel connectivity certificate used in the shared-memory parallel minimum cut algorithm by Henzinger~\etal\cite{henzinger2019shared}. This connectivity certificate is essentially a parallel version of the connectivity certificate of Nagamochi~\etal\cite{nagamochi1992computing,nagamochi1994implementing}, in which the processors divide the work of computing the connectivity bounds for all edges of the graph. In subsequent iterations every processor runs an independent run of the connectivity certificate of Nagamochi~\etal on the whole graph starting from different random vertices in the graph. As the connectivity bounds given by the algorithm heavily depend on the starting vertex, this allows us to find significantly more contractible edges per round than running the connectivity certificate only once. 

We use the shared-memory parallel minimum cut algorithm of Henzinger~\etal\cite{henzinger2019shared} to find the exact minimum cut of the graph. The algorithm of Nagamochi~\etal\cite{nagamochi2000fast} is not shared-memory parallel, however we usually manage to contract the graph to a size proportional to the minimum cut cactus before calling them. Unfortunately it is not beneficial to perform the recursive calls embarrassingly parallel, as in almost all cases one of the connected components of the residual graph contains the vast majority of vertices and thus also has the overwhelming majority of work. 

\section{Applications} \label{s:balanced}

We can use the minimum cut cactus $C_G$ to find a minimum cut fulfilling certain balance criteria, such as a most balanced minimum cut, e.g. a minimum cut $(A,V \setminus A)$ that maximizes min$(|A|,|V \setminus A|)$. Note that this is not equal to the most balanced $s$-$t$-cut problem, which is NP hard~\cite{bonsma2007most}. Following that we show how to modify the algorithm to find the optimal minimum cut for other optimization functions.

One can find a most balanced minimum cut trivially in time $\Oh{(n^*)^3}$, as one can enumerate all $\Oh{(n^*)^2}$ minimum cuts~\cite{karger2000minimum} and add up the number of vertices of the original graph $G$ on either side. We now show how to find a most balanced minimum cut of a graph $G$ in $\Oh{n^* + m^*}$ time, given the minimum cut cactus graph $C_G$. 

For every cut $(A, V\backslash A)$, we define the balance $b(A)$ (or $b(V\backslash A))$ of the cut as the number of vertices of the original graph encompassed in the lighter side of the cut. Recall that for any node $v \in V_G$, $c(v)$ is the number of vertices of $G$ represented by $v$. For a leaf $v \in V_G$, we set its weight $w(v) = c(v)$ and set the balance $b(v)$ to be the minimum of $w(v)$ and $n - w(v)$. We root $C_G$ in an arbitrary vertex and depending on that root define $w(v)$ as the sum of vertex weights in the subcactus rooted in $v$; and $b(v)$ accordingly. For a cycle $C = \{c_1, \dots, c_i\}$, we define $b(c_j,\dots,c_{k \mod i})$ with $0 \geq j \geq k$ analogously as the balance of the minimum cut splitting the cycle so that the sub-cacti rooted in $c_j, \dots, c_{k \mod i}$ are on one side of the cut and the rest are on the other side (see blue line in Figure~\ref{fig:mostbalanced} for an example).

\begin{figure}[t!]\centering
  \includegraphics[width=.5\textwidth]{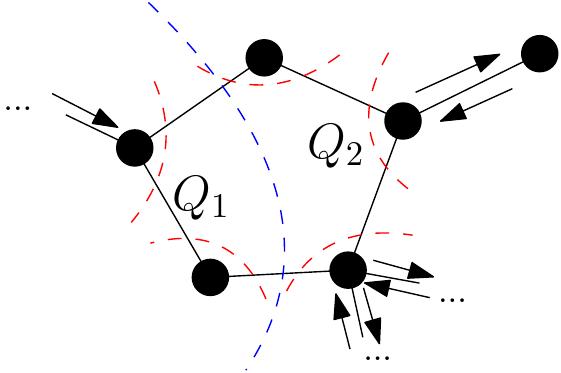}
  \caption{Cycle check in balanced cut algorithm\label{fig:mostbalanced}}
\end{figure}

Let $T_G$ be the tree representation of $C_G$ where each cycle in $C_G$ is contracted into a single vertex. We perform a depth-first search on $T_G$ rooted on an arbitrary vertex and check the balance of every cut in $T_G$ when backtracking. 

As $C_G$ is not necessarily a tree, we might encounter cycles and we explain next how to extend the depth first search to handle such cycles. Let $\mathcal{C} = \{c_0,\dots,c_{i-1}\}$ be a cycle and $c_0$ be the vertex encountered first by the DFS. Due to the cactus graph structure of $C_G$, the depth-first search backtracks from a vertex $v_{cy}$ in $T_G$ that represents $\mathcal{C}$ only after all subtrees rooted in $\mathcal{C}$ are explored. Thus, we know the weight of all subtrees rooted in vertices $c_1,\dots,c_{i-1}$ when backtracking. The weight of $c_0$ is equal to $n$ minus the sum of these sub-cactus weights. 

Examining all cuts in the cycle would take $i^3$ time, but as we only want to find the most balanced cut, we can check only a subset of them, as shown in Algorithm~\ref{alg:cycle}. $Q_1$ and $Q_2$ are \emph{queues}, thus elements are ordered and the following operations are supported: \emph{queue} adds an element to the back of the queue, called the \emph{tail} of the queue, \emph{dequeue} removes the element at the front of the queue, called the \emph{head} of the queue. We implicitly use the fact that queues can only be appended to, thus an element $q$ was added to the queue after all elements that are closer to the head of the queue and before all elements that are closer to its tail.

\begin{algorithm}
  \caption{Algorithm to find most balanced cut in cycle $\{c_0,\dots,c_{i-1}\}$}\label{alg:cycle}
  \begin{algorithmic}[1]
  \Procedure{BalanceInCycle}{$G = (V,E), C =\{c_1,\dots,c_i\}$}
  \State $b_{OPT} \gets 0$
  \State $Q_1 =$ Queue($\{\}$)
  \State $Q_2 =$ Queue($\{c_0,c_1,\dots,c_{i-1}\}$)
  \While{$c_0$ not $Q_1$.head() for second time}
    \State $b_{OPT} \gets$ checkBalance($Q_1, Q_2$)
    \If{$w(Q_1) > w(Q_2)$} 
      \State $Q_2$.queue($Q_1$.dequeue())
    \Else
      \State $Q_1$.queue($Q_2$.dequeue())
    \EndIf
  \EndWhile
  \Return $b_{OPT}$
  \EndProcedure
  \end{algorithmic}
  \end{algorithm}

The weight of a queue $w(Q)$ is denoted as the weight of its contents. For queue $Q = \{c_{j \bmod i},\dots,c_{k \bmod i}\}$ with $0 \leq j \leq k$, we use the notation $w_{j \bmod i,k \bmod i}$ to denote the weight of $Q$ and $\overline{w_{j \bmod i, k \bmod i}}$ as the weight of the queue that contains all cycle vertices not in $Q$. 

In every step of the algorithm, the cut represented by the current state of the queues consists of the two edges connecting the queue heads to the tails of the respective other queue. Initially $Q_1$ is empty and $Q_2$ contains all elements, in order from $c_0$ to $c_{i-1}$. In every step of the algorithm, we dequeue one element and queue it in the other queue. Thus, at every step each cycle vertex is in exactly one queue. When we check the balance of a cut, we compute the weight of each queue at the current point in time; and update $b_{OPT}$, the best balance found so far, if $(Q_1, Q_2)$ is more balanced. As we only move one cycle vertex in each step, we can check the balance of an adjacent cut in constant time by adding and subtracting the weight of the moved vertex to the weights of each set.

\begin{lemma}
  Algorithm~\ref{alg:cycle} terminates after $O(i)$ steps.
\end{lemma}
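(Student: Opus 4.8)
The plan is to track the two queues by the contiguous arcs of the cycle they occupy and show that the two ``cut endpoints'' sweep monotonically around the cycle, each making only $O(1)$ full loops.

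First I would establish the structural invariant that after the first iteration $Q_1$ and $Q_2$ are always complementary contiguous arcs of the cycle $\{c_0,\dots,c_{i-1}\}$, listed in cyclic order from head to tail: there are indices $a,b$ (mod $i$) with $Q_1=[c_a,c_{a+1},\dots,c_{b-1}]$ and $Q_2=[c_b,c_{b+1},\dots,c_{a-1}]$, and the current cut is exactly the pair of cycle edges $(c_{a-1},c_a)$ and $(c_{b-1},c_b)$. This is proved by induction on the while-loop iterations: it holds after the first iteration (which moves $c_0$ from $Q_2$ to the empty $Q_1$, giving $a=0,b=1$), and one iteration moves the head of one queue to the tail of the other, which, given the arc form, merely advances one of the two endpoints by one position. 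Concretely, the $\textbf{if}$-branch ($w(Q_1)>w(Q_2)$) dequeues $c_a$ and appends it to $Q_2$, replacing $a$ by $a+1$ while $b$ is unchanged; the $\textbf{else}$-branch dequeues $c_b$ and appends it to $Q_1$, replacing $b$ by $b+1$ while $a$ is unchanged; in both cases the complementary queue is updated consistently with the arc form. (An $\textbf{if}$-step requires $Q_1\neq\emptyset$, which holds because $w(Q_1)>w(Q_2)\ge 0$ forces $w(Q_1)>0$.)

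Next I would pass to unrolled coordinates $\tilde a,\tilde b\in\MdN$ with $a\equiv\tilde a$ and $b\equiv\tilde b\pmod i$, both starting at $0$; each iteration increments exactly one of them by exactly one, so both are non-decreasing. The difference $\tilde b-\tilde a$ equals $|Q_1|$ and therefore lies in $[0,i]$, since the two queues always partition the $i$ cycle vertices; moreover whenever $\tilde b-\tilde a=i$ we have $Q_2=\emptyset$ and $w(Q_1)=\sum_j w(c_j)=n>0=w(Q_2)$, so the next iteration is an $\textbf{if}$-step, and hence $\tilde b-\tilde a$ never exceeds $i$. The loop guard is exactly ``$Q_1.\mathrm{head}()=c_0$'', i.e.\ ``$\tilde a\equiv 0\pmod i$''; $c_0$ first becomes the head of $Q_1$ right after the first iteration (when $\tilde a=0$), and the loop halts the next time this occurs, i.e.\ when $\tilde a$ first reaches $i$, which takes exactly $i$ $\textbf{if}$-steps (up to one extra step if $Q_1$ is momentarily empty at that instant). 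At that point $\tilde b=\tilde a+|Q_1|\le i+i=2i$, so at most $2i$ $\textbf{else}$-steps have occurred. The total number of iterations is therefore at most $i+2i=O(i)$; in particular the loop terminates, since otherwise the iteration count $N=\tilde a_N+\tilde b_N\le 2\tilde a_N+i$ would force $\tilde a_N\to\infty$, yet $\tilde a_N$ only needs to reach $i$.

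I expect the main obstacle to be nailing down the arc invariant with the correct head/tail orientation -- verifying that a dequeue-then-enqueue genuinely advances a single endpoint rather than doing something more complicated -- together with handling the few boundary cases cleanly ($Q_1$ momentarily empty, $Q_1$ momentarily the whole cycle, and the precise reading of ``for the second time'' in the guard). Once the invariant is in place, the monotone-pointer counting is immediate.
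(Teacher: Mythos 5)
Your proof is correct and takes essentially the same route as the paper's: both bound the iteration count by roughly $3i$ by exploiting the FIFO structure of the queues, the paper via the informal observation that no vertex can overtake $c_0$ and hence every vertex is moved at most three times, you via the arc invariant and the monotone unrolled pointers $\tilde a,\tilde b$. Your version is just a more mechanized write-up of the same idea, with the boundary cases (momentarily empty $Q_1$, the reading of ``for the second time'') handled explicitly.
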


\begin{proof}
  In each step of Algorithm~\ref{alg:cycle}, one queue head is moved to the other queue. The algorithm terminates when $c_0$ is the head of $Q_1$ for the second time. In the first step, $c_0$ is moved to $Q_1$, as the empty queue $Q_1$ is the lighter one. The algorithm terminates after $c_0$ then performs a full round through both queues and is the head of $Q_1$ again. At termination, $c_0$ was thus moved a total of three times, twice from $Q_2$ to $Q_1$ and once the other way. As no element can 'overtake' $c_0$ in the queues, every vertex will be moved at most three times. Thus, we enter the loop at most $3i$ times, each time only using a constant~amount~of~time.
\end{proof}

In Algorithm~\ref{alg:cycle}, we only check the balance of a subset of cuts represented by edges in the cycle $C$. Lemma~\ref{lem:mostbal} shows that none of the disregarded cuts can have balance better than $b_{OPT}$ and we thus find the most balanced minimum cut. We call a cut disregarded if its balance was never checked (Line $6$), and considered otherwise. In order to prove correctness of Algorithm~\ref{alg:cycle}, we first show the following Lemma:

\begin{lemma}\label{lem:head}
Each vertex in the cycle is dequeued from $Q_1$ at least once in the algorithm.
\end{lemma}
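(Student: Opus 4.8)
The plan is to track how the two queues partition the cyclic sequence $(c_0,c_1,\dots,c_{i-1})$ throughout the execution of the loop in Algorithm~\ref{alg:cycle}. The key invariant I would maintain is that, if one writes down $Q_1$ from head to tail followed by $Q_2$ from head to tail, the resulting sequence is always a cyclic rotation of $(c_0,c_1,\dots,c_{i-1})$; equivalently, $Q_1$ always holds a contiguous arc of the cycle, read from its head, and $Q_2$ the complementary arc. This clearly holds before the loop starts ($Q_1$ is empty and the concatenation is exactly $(c_0,\dots,c_{i-1})$), and I would check that each of the two possible moves preserves it: a step that dequeues the head of $Q_2$ and appends it to the tail of $Q_1$ leaves the concatenated sequence completely unchanged, whereas a step that dequeues the head of $Q_1$ and appends it to the tail of $Q_2$ rotates it left by exactly one position. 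In particular the very first iteration, which moves $c_0$ out of $Q_2$ into the empty $Q_1$, is a step of the first kind, so afterwards the concatenation is still $(c_0,c_1,\dots,c_{i-1})$ with $Q_1=(c_0)$.

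Next I would introduce $k$, the number of steps of the second kind (head of $Q_1$ to tail of $Q_2$) executed so far. By the invariant, after $k$ such steps the concatenation equals $(c_k,c_{k+1},\dots)$ with all indices taken mod $i$; hence whenever $Q_1$ is non-empty its head is $c_{k\bmod i}$, and when $Q_1$ is momentarily empty the next step puts $c_{k\bmod i}$ back at its head. Consequently $c_0$ is the head of $Q_1$ exactly at the moments with $k \equiv 0 \pmod{i}$: the first is right after the first iteration, when $k=0$, and the next is when $k$ reaches $i$. Since the preceding lemma guarantees that the loop terminates, and its only exit condition is that $c_0$ becomes the head of $Q_1$ for the second time, the loop performs exactly $i$ steps of the second kind. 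Just before the $j$-th such step we have $k=j-1$ and $Q_1$ non-empty, so that step dequeues $c_{(j-1)\bmod i}$ from $Q_1$; letting $j$ run over $1,\dots,i$ shows that every vertex $c_0,\dots,c_{i-1}$ is dequeued from $Q_1$ (in fact exactly once), which is the claim.

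I expect the bookkeeping rather than any conceptual difficulty to be the main obstacle. Two points need care. First, one must tie the informal termination condition ``$c_0$ is the head of $Q_1$ for the second time'' precisely to ``$k=i$'', so that the loop is neither cut off while $c_0$ is still sitting at the head of $Q_1$ right after the first iteration, nor allowed to perform a further step of the second kind past $k=i$; inspecting the few configurations reachable at $k=i$ (including the degenerate one in which the $i$-th such step empties $Q_1$ and a single extra move re-inserts $c_0$) settles this. Second, one must verify that the head index $k\bmod i$ advances by exactly one per step of the second kind and never jumps — in particular across the transient in which $Q_1$ becomes empty and is immediately refilled — so that the $i$ dequeues from $Q_1$ genuinely cover all residues $0,1,\dots,i-1$. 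Both are routine once the rotation invariant is established, and termination itself is simply inherited from the preceding lemma.
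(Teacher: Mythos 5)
Your proof is correct. You take a genuinely different route from the paper. The paper argues directly from the FIFO discipline: since $c_0$ is re-queued at the back of $Q_2$, every other vertex must be dequeued from $Q_2$ (and hence enter $Q_1$) before $c_0$'s second $Q_2$-dequeue, and since those vertices sit ahead of $c_0$ in $Q_1$ when it re-enters, they must all leave $Q_1$ before $c_0$ can become its head again. You instead set up an explicit structural invariant — that the concatenation of $Q_1$ (head to tail) and $Q_2$ (head to tail) is at all times a cyclic rotation of $(c_0,\dots,c_{i-1})$ — and track the rotation offset $k$ as the number of $Q_1\!\to\!Q_2$ moves, obtaining that the head of $Q_1$ is always $c_{k\bmod i}$ and that the algorithm therefore executes exactly $i$ such moves, dequeuing $c_0,\dots,c_{i-1}$ in order. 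The paper's argument is shorter, but relies on the reader accepting the ``no overtaking'' intuition implicitly; your invariant makes that intuition precise, handles the transient-empty-$Q_1$ case cleanly, and as a bonus proves the sharper statement that each vertex is dequeued \emph{exactly} once (the paper only establishes an ``at least once'' bound here and an ``at most three moves'' bound in the preceding termination lemma). Both proofs depend on the termination lemma in the same way, so the dependency structure is unchanged.
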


\begin{proof}
  The algorithm terminates when $c_0$ is the head of $Q_1$ for the second time. For this, it needs to be moved from $Q_2$ to $Q_1$ twice. As we queue elements to the back of a queue, all vertices are dequeued from $Q_2$ before $c_0$ is dequeued from it for the second time. In order for $c_0$ to become the head of $Q_1$ again, all elements that were added beforehand need to be dequeued from $Q_1$.
\end{proof}

\begin{lemma}\label{lem:mostbal}
  Algorithm~\ref{alg:cycle} finds the most balanced minimum cut represented by cycle~$C$.
\end{lemma}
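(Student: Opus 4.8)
The plan is to split the cuts of $C$ into the \emph{considered} ones, whose balance is computed in Line~$6$ for some state of the queues, and the \emph{disregarded} ones, and to prove that every disregarded cut has balance at most the returned value $b_{OPT}$. Since $b_{OPT}$ is the largest balance among considered cuts and every considered state with both queues non-empty realises an actual cut, this proves the lemma.

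First I would pin down the structure of a run. By Lemma~\ref{lem:head} every cycle vertex is the head of $Q_1$ at some point, and since $Q_1$ is always a contiguous arc of the cycle, the run breaks into \emph{phases}, phase $l$ being the maximal block of iterations in which $c_l$ heads $Q_1$. From the update rule of Algorithm~\ref{alg:cycle} one checks: within a phase the tail of $Q_1$ only advances; the tail is globally non-decreasing; the head makes exactly one full turn around the cycle before the termination test fires; and in phase $l$ the arc $Q_1$ runs through $\{c_l,\dots,c_b\}$ for $b$ in a contiguous interval, from an initial arc $S_l$ to a final arc $E_l$ with $w(E_l)>n/2$ --- the latter being exactly why the a-move that closes the phase is taken and the head then advances. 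Call phase $l$ \emph{trivial} if it is a single step, equivalently $w(S_l)>n/2$, and \emph{non-trivial} otherwise; in a non-trivial phase at least one b-move happens, so $E_l$ minus its last vertex --- which is the maximal arc of weight $\le n/2$ starting at $c_l$, write it $M_l$ --- is visited and hence considered.

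Now take a disregarded cut $(A,V\setminus A)$ and, putting the lighter side first, assume $w(A)\le n/2$, so its balance equals $w(A)$; write $A=\{c_l,\dots,c_m\}$. As $A$ is never the current $Q_1$ during phase $l$, the index $m$ lies outside the range of tail indices seen in phase $l$. It cannot lie past the end of that range, for then $A\supseteq E_l$ and $w(A)\ge w(E_l)>n/2$. Hence $A$ is a proper sub-arc of $S_l$, so $w(A)\le w(S_l)$. If phase $l$ is non-trivial then $S_l$ is light, the cut induced by $S_l$ is considered, and $w(A)\le w(S_l)=b(S_l)\le b_{OPT}$; done.

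The remaining case --- phase $l$ trivial, so $S_l$ heavy --- is the crux and the step I expect to be hardest. Here I would backtrack to $l_0$, the last non-trivial phase before phase $l$; such an $l_0$ exists because not all phases can be trivial (two cycle vertices cannot both have weight $>n/2$), and because inspecting phase $0$, whose initial arc is the singleton $\{c_0\}$, rules out that phases $0,\dots,l-1$ are all trivial. Every phase strictly between $l_0$ and $l$ is trivial and so makes no b-move; hence the tail does not move between the end of phase $l_0$ and the start of phase $l$, so $M_{l_0}$ (that is, $E_{l_0}$ without its last vertex) contains the arc $A$. As $M_{l_0}$ is light and considered, $w(A)\le w(M_{l_0})=b(M_{l_0})\le b_{OPT}$. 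The delicate part here is confirming that the tail is really frozen across the intermediate trivial phases; this fails only if some such phase shrinks $Q_1$ to the empty arc, which happens exactly when one cycle vertex carries weight $>n/2$. Since there is at most one such vertex this case is isolated: then every cut has that vertex on the heavy side, so $w(A)\le n/2$ forces $A$ to avoid it and the containment of $A$ in $M_{l_0}$ still holds. Assembling the cases proves Lemma~\ref{lem:mostbal}.
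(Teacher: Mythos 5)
Your proof is correct and follows essentially the same strategy as the paper's: fix the head vertex $c_l$, argue that a disregarded arc $A=\{c_l,\dots,c_m\}$ must be a proper sub-arc of the initial arc $S_l$ (pushing past the final arc makes $A$ heavy), and backtrack to the last $b$-move to exhibit a considered light arc containing $A$. Your phase / trivial-vs.-non-trivial vocabulary is a tidy repackaging of the paper's time points $t_l$, $t^*$, $t'$; your $S_l$, $E_l$, $M_{l_0}$ are precisely the arcs the paper denotes $\{c_l,\dots,c_{k\bmod i}\}$, $\{c_l,\dots,c_{j\bmod i}\}$, and $\{c_{l_0},\dots,c_{k-1\bmod i}\}$ at those times, and your two sub-cases on $w(S_l)$ are exactly the paper's split on whether $w_{l,k\bmod i}\le\overline{w_{l,k\bmod i}}$. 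The one spot that is looser than it should be is the ``shrink to empty'' edge case: you assert without real justification that $A\subseteq M_{l_0}$ still holds. The cleaner observation is that this configuration simply cannot occur under the running hypotheses: if some intermediate trivial phase empties $Q_1$, the next phase begins from a fresh singleton, which must again be heavy to remain trivial --- impossible, since at most one cycle vertex can have weight exceeding $n/2$; and if that next phase is phase $l$ itself, then $w(c_l)>n/2$ contradicts $w(A)\le n/2$. With that patch your argument is complete and matches the paper's.
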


\begin{proof}
  We now prove for each $c_l \in \mathcal{C}$ that all disregarded cuts containing the cycle edge separating $c_l$ from $c_{(l-1) \bmod i}$ are not more balanced than the most balanced cut found so far. As no disregarded cut can be more balanced than the most balanced cut considered in the algorithm, the output of the algorithm is the most balanced minimum cut; or one of them if multiple cuts of equal balance exist.

  \begin{figure}[H] \centering
    \includegraphics[width=.45\textwidth]{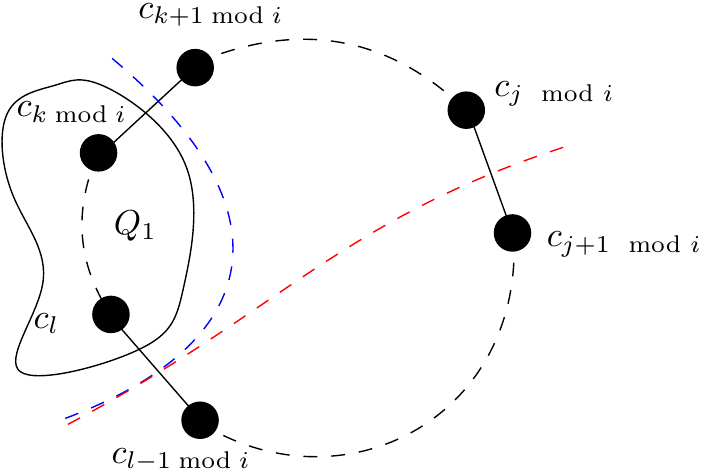}
    \caption{State of $Q_1$ at time $t_l$ (cut in blue). Cut in red denotes cut considered at time $t^*$\label{fig:cycle_slice}}.
    \vspace*{-.5cm}
  \end{figure}

  Let $t_l$ be the time that $c_l$ becomes the head of $Q_1$ for the first time. Figure~\ref{fig:cycle_slice} shows the state of $Q_1$ at that point in time. Let $c_{k \bmod i}$ be the tail of $Q_1$ at time $t_l$ for some integer $k$. Right before $t_l$, $c_{l-1 \bmod i}$ was head of the heavier queue $Q_1$ and thus dequeued, \ie $Q_1 = \{c_{l-1 \bmod i},\dots,c_{k \bmod i}\}$ has weight $w_{l-1 \bmod i, k \bmod i} \geq \overline{w_{l-1 \bmod i, k \bmod i}}$ and $c_l$ is now head of $Q_1$.

  From this point $t_l$ the algorithm considers cuts that separate $c_l$ from $c_{l-1 \bmod i}$. While $Q_1$ is not heavier than $Q_2$, we add more elements to the tail of $Q_1$ (and check the respective cuts) until $Q_1$ is the heavier queue. Let $t^*$ be the time when this happens and $c_{j \bmod i}$ with $j \geq k$ be the tail of $Q_1$ at this point. Note that at time $t^*$, $c_l$ is about to be dequeued from $Q_1$. The red cut in Figure~\ref{fig:cycle_slice} shows the cut at time $t^*$, where $w_{c_l,c_{j \bmod i}} > \overline{w_{c_l,c_{j \bmod i}}}$.
  
  We now prove that all cuts in which $c_l$ is the head of $Q_1$ and its tail is not between $c_{k \bmod i}$ and $c_{j \bmod i}$ cannot be more balanced than the most balanced cut considered so far.
  
  For all cuts where $c_l$ is head of $Q_1$ and $Q_1$ also contains $c_{j+1 \bmod i}$, $Q_1$ is heavier than $w_{l,j \bmod i}$, as it contains all elements in $c_l,\dots,c_{j \bmod i}$ plus at least one more. As $w_{l,j \bmod i} > \overline{w_{l,j \bmod i}}$, \ie $Q_1$ is already heavier when $c_{j \bmod i}$ is its tail, all of these cuts are less balanced than $(\{c_l,\dots,c_{j \bmod i}\}, \mathcal{C} \backslash \{c_l,\dots,c_{j \bmod i}\})$.

  For the cuts in which $c_{k \bmod i}$ is in $Q_2$, \ie $Q_1$ is lighter than at time $t_l$, we need to distinguish two cases, depending on whether $w_{l,k \bmod i}$ is larger than $\overline{w_{l,k \bmod i}}$ or not.

  If $w_{l,k \bmod i} \leq \overline{w_{l,k \bmod i}}$, all cuts in which $c_l$ is the head of $Q_1$ and $c_{k \bmod i}$ is in $Q_2$ are less balanced than $(\{c_l,\dots,c_{k \bmod i}\}, \mathcal{C} \backslash \{c_l,\dots,c_{k \bmod i}\})$, as $Q_1$ is lighter than it is at $t_l$, where it was already not the heavier queue.

  If $w_{l,k \bmod i} > \overline{w_{l,k \bmod i}}$, there might be cuts in which $c_l$ is the head of $Q_1$ that are more balanced than $(\{c_l,\dots,c_{k \bmod i}\}, \mathcal{C} \backslash \{c_l,\dots,c_{k \bmod i}\})$ in which $Q_1$ is lighter than at time $t_l$. Thus, consider time $t'$ when $c_{k \bmod i}$ was added to $Q_1$. Such a time must exist, since $Q_1$ is initially empty. As $c_{k \bmod i}$ is already the tail of $Q_1$ at time $t_l$, $t' < t_l$. At that time $Q_1$ contained $c_{l-1 \bmod i}, \dots, c_{k-1 \bmod i}$ and potentially more vertices. 
  
  Still, $w_{l-1 \bmod i, k-1 \bmod i} \leq \overline{w_{l-1 \bmod i, k-1 \bmod i}}$, as otherwise $c_{k \bmod i}$ would not have been added to $Q_1$. Obviously $w_{l-1 \bmod i, k-1 \bmod i} > w_{l, k-1 \bmod i}$, as $Q_1$ is even lighter when $c_{l-1 \bmod i}$ is dequeued. As $w_{l-1 \bmod i, k-1 \bmod i}$ is already not heavier than its complement, $(\{c_l,\dots,c_{k-1 \bmod i}\}, \mathcal{C} \backslash \{c_l,\dots,c_{k-1 \bmod i}\})$ is more imbalanced than the cut examined just before time $t'$. Thus, all cuts where $c_l$ is the head of $Q_1$ and $c_{k-1 \bmod i}$ is in $Q_2$ are even more imbalanced, as $Q_1$ is even lighter.

  Coming back to the outline shown in Figure~\ref{fig:cycle_slice}, we showed that for all cuts in which $c_l$ is head of $Q_1$ and $Q_1$ is lighter than at time $t_l$ (left of blue cut) and all cuts where $Q_1$ is heavier than at time $t^*$ (below red cut) can be safely disregarded, as a more balanced cut than any of them was considered at some point between $t'$ and $t^*$. The algorithm considers next all cuts with $c_l$ as head of $Q_1$ and the tail of $Q_1$ between $c_{k \bmod i}$ and $c_{j \bmod i}$. Thus, the algorithm will return a cut that is at least as balanced as the most balanced cut that separates $c_l$ and $c_{l-1 \bmod i}$. This is true for every cycle vertex $v_l \in \mathcal{C}$, which concludes the proof.
\end{proof}

This allows us to perform the depth-first search and find the most balanced minimum cut in $C_G$ in time $\Oh{n^* + m^*}$. This algorithm can be adapted to find the minimum cut of any other optimization function of a cut that only depends on the (weight of the) edges on the cut and the (weight of the) vertices on either side of the cut. In order to retain the linear running time of the algorithm, the function needs to be evaluable in constant time on a neighboring cut. For example, we can find the minimum cut of lowest conductance. The conductance of a cut $(S, V \setminus S)$ is defined as $\frac{\lambda(S,(V\setminus S))}{min(a(S), a(V\setminus S))}$, where $a(S)$ is the sum of degrees for all vertices in set $S$. Note that this is not the minimum conductance cut problem, which is NP-hard~\cite{andersen2008algorithm}, as we only look at the minimum cuts. To find the minimum cut of lowest conductance, we set the weight of a vertex $v_{C_G} \in C_G$ to the sum of vertex degrees encompassed in $v_{C_G}$. Otherwise the algorithm remains the same.

\section{Experiments and Results} \label{s:experiments}

We now perform an experimental evaluation of the proposed algorithms.
This is done in the following order: first analyze the impact of algorithmic components on our minimum cut algorithm in a non-parallel setting, i.e.~we compare different variants for edge selection and see the impact of the various optimizations detailed in this work. Afterwards, we report parallel speedup on a variety of large graphs.

\paragraph*{Experimental Setup and Methodology}

We implemented the algorithms using \CC-17 and compiled all code using g++ version 8.3.0 with full optimization (\texttt{-O3}). Our experiments are conducted on a machine with two Intel Xeon Gold 6130 processors with 2.1GHz with 16 CPU cores each and $256$ GB RAM in total. We perform five repetitions per instance and report average running time. In this section we first describe our experimental methodology. Afterwards, we evaluate different algorithmic choices in our algorithm and then we compare our algorithm to the state of the art. When we report a mean result we give the geometric mean as problems differ significantly in cut size and time. Our code is freely available under the permissive MIT license\ifDoubleBlind{\footnote{Link removed, as this submission is double-blind}.}\else{\footnote{\url{https://github.com/alexnoe/VieCut}}\fi{}.

\paragraph*{Instances}

We use a variety of graphs from the 10th DIMACS Implementation challenge~\cite{bader2013graph} and the SuiteSparse Matrix Collection~\cite{davis2011university}. These are social graphs, web graphs, co-purchase matrices, cooperation networks and some generated instances. Table~\ref{tab:small} shows a set of smaller instances and Table~\ref{tab:large} shows a set of larger and harder to solve instances. All instances are undirected. If the original graph is directed, we generate an undirected graph by removing edge directions and then removing duplicate edges. If a network has multiple connected components, we run on the largest.

As most large real-world networks have cuts of size 1, finding all minimum cuts becomes essentially the same as finding all bridges, which can be solved in linear time using depth-first search~\cite{tarjan1972depth}. However, usually there is usually one huge block that is connected by minimum cuts to a set of small and medium size blocks. Thus, we use our algorithm to generate a more balanced set of instances. We find all minimum cuts and contract each edge that does not connect two vertices of the largest block. Thus, the remaining graph only contains the huge block and is guaranteed to have a minimum cut value $>\lambda$. We use this method to generate multiple graphs with different minimum cuts for each instance.

\subsection{Edge Selection}

Figure~\ref{fig:edgeselect} shows the results for the graphs in Table~\ref{tab:small}. We compute the cactus graph representing all minimum cuts using the edge selection variants \texttt{Random}, \texttt{Central}, \texttt{Heavy} and \texttt{HeavyWeighted}, as detailed in Section~\ref{ss:cactus}. As we want a majority of the running time in the recursive algorithm of Nagamochi~\etal\cite{nagamochi2000fast}, where we actually select edges, we run a variant of our algorithm that only contracts edges using connectivity-based contraction and then runs the algorithm of Nagamochi~\etal\cite{nagamochi2000fast}.

\begin{figure}[t!]\centering
  \includegraphics[width=.65\textwidth]{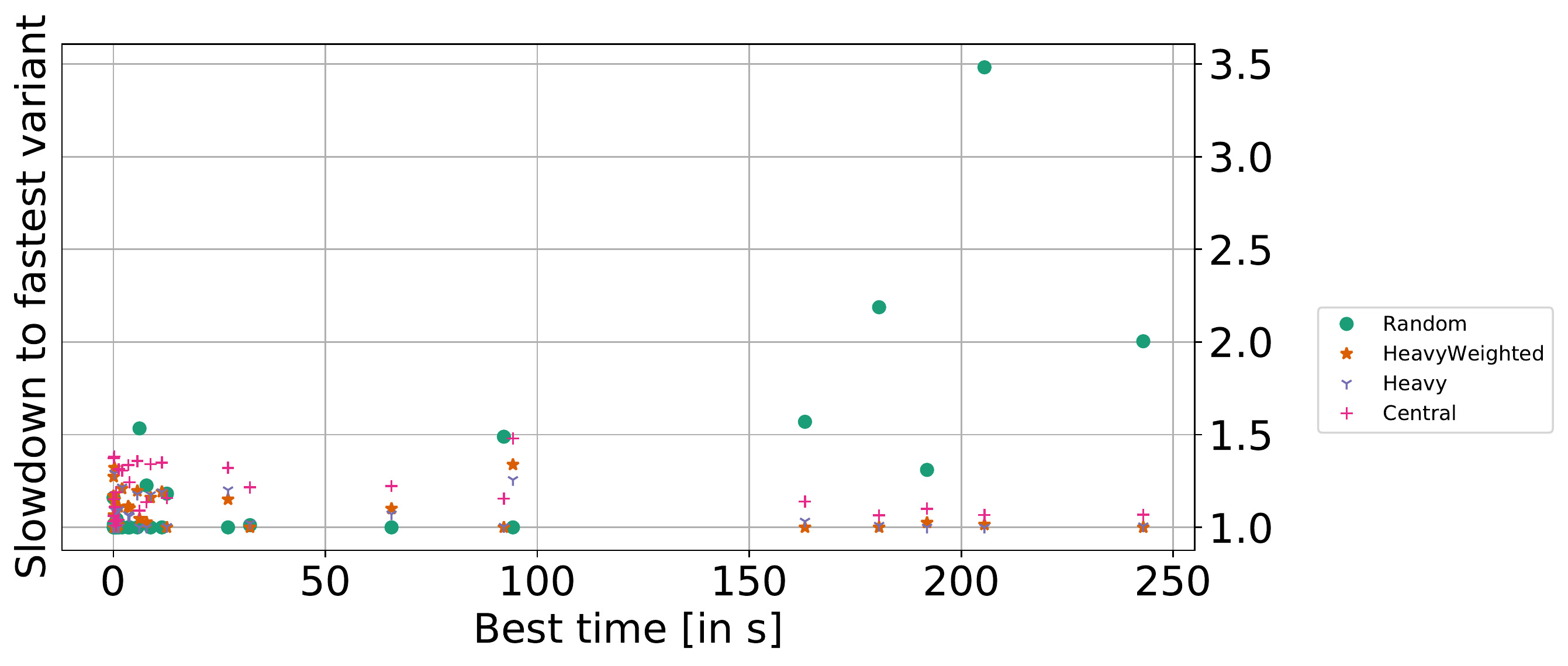}
  \caption{Effect of edge selection strategies.\label{fig:edgeselect}}
\end{figure}%

We can see that in the graphs which cannot be contracted quickly, \texttt{Random} is significantly slower than all other variants. On \texttt{cnr-2000}, \texttt{Random} takes over $700$ seconds in average, whereas all other variants finish in approximately $200$ seconds. This happens independently of the random seed used, there is no large deviation in the running time on any of the graphs. On almost all graphs, the variants \texttt{Heavy} and \texttt{HeavyWeighted} are within $3\%$ of each other, which is not surprising, as the variants are almost identical. While it optimizes for 'edge centrality' very directly, \texttt{Central} has two iterations of breadth-first search in each edge selection and thus a sizable overhead. For this reason it is usually $5-15\%$ slower than \texttt{Heavy} and is not the fastest algorithm on any graph. On graphs with large $n^*$, all three variants manage to shrink the graph significantly faster than \texttt{Random}.

On graphs with a low value of $n^*$, we can see that \texttt{Random} is slightly faster than the other variants. There is no significant difference in the shrinking of the graph, as almost all selected edges have connectivity larger than $\lambda$ and thus only trigger a single edge contraction anyway. Thus, not spending the extra work of finding a `good' edge results in a slightly lower running time.
In the following we will use variant \texttt{Heavy}, which is the only variant that is never more than $30\%$ slower than the fastest variant on any graph.

\subsection{Optimization}

We now examine the effect of the different optimizations. For this purpose, we benchmarks different variants on a variety of graphs. We hereby compare the following variants that build on one another: as a baseline, \optZero{} runs the algorithm of Nagamochi, Nakao and Ibaraki~\cite{nagamochi2000fast} on the input graph. \optOne{} additionally runs VieCut~\cite{henzinger2018practical} to find an upper bound for the minimum cut and uses this to contract high-connectivity edges as described in Section~\ref{sss:connectivity}. In addition to this, \optTwo{} also contracts edges whose neighborhood guarantees that they are not part of any minimum cut, as described in Section~\ref{sss:local} and Lemma~\ref{lem:local_crit}. \optThree{} runs also the last remaining contraction routine from Algorithm~\ref{alg:overview}, contraction and re-insertion of degree-one vertices as described in Section~\ref{sss:degreeone}.

\begin{figure}[t]\centering
  \includegraphics[width=.65\textwidth]{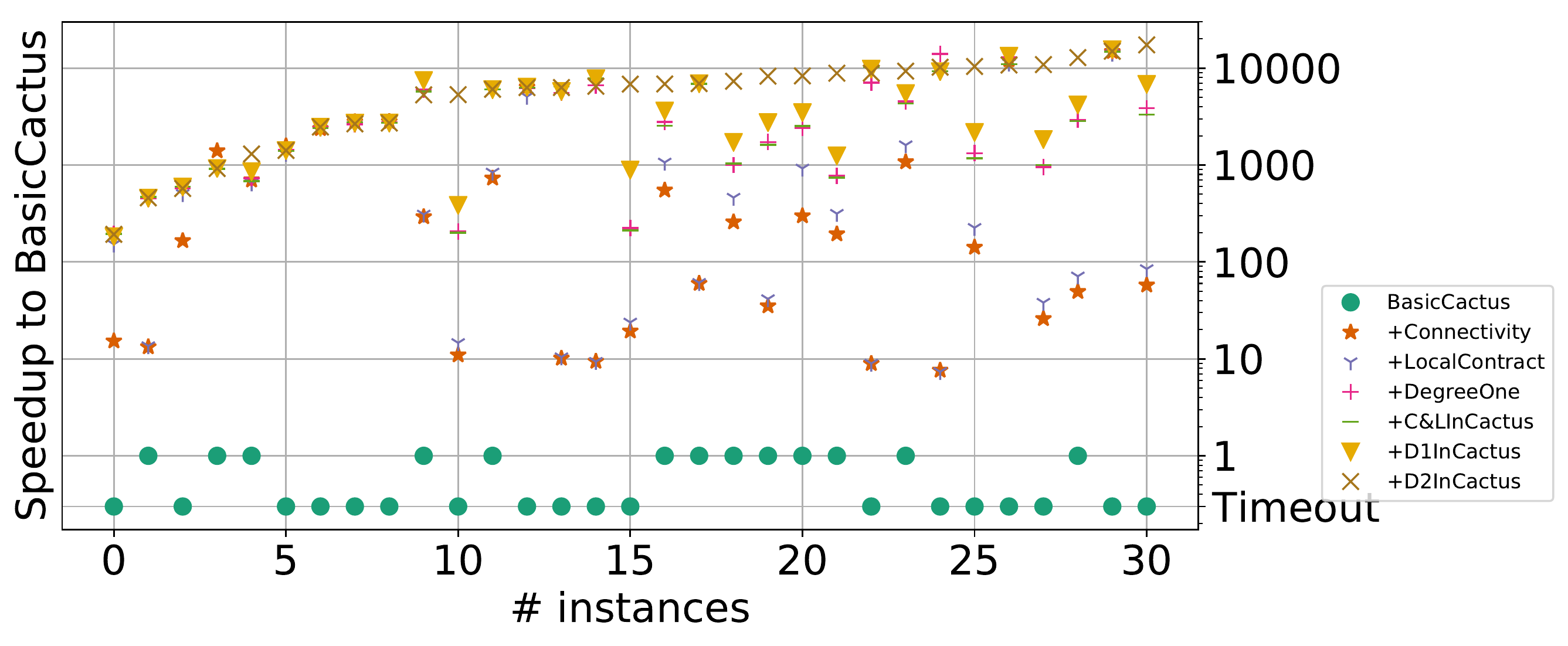}
  \caption{Speedup to \texttt{BasicCactus} on small graphs (Table~\ref{tab:small}).\label{fig:smallopt}}
\end{figure}

\begin{figure}[t]\centering
  \includegraphics[width=.65\textwidth]{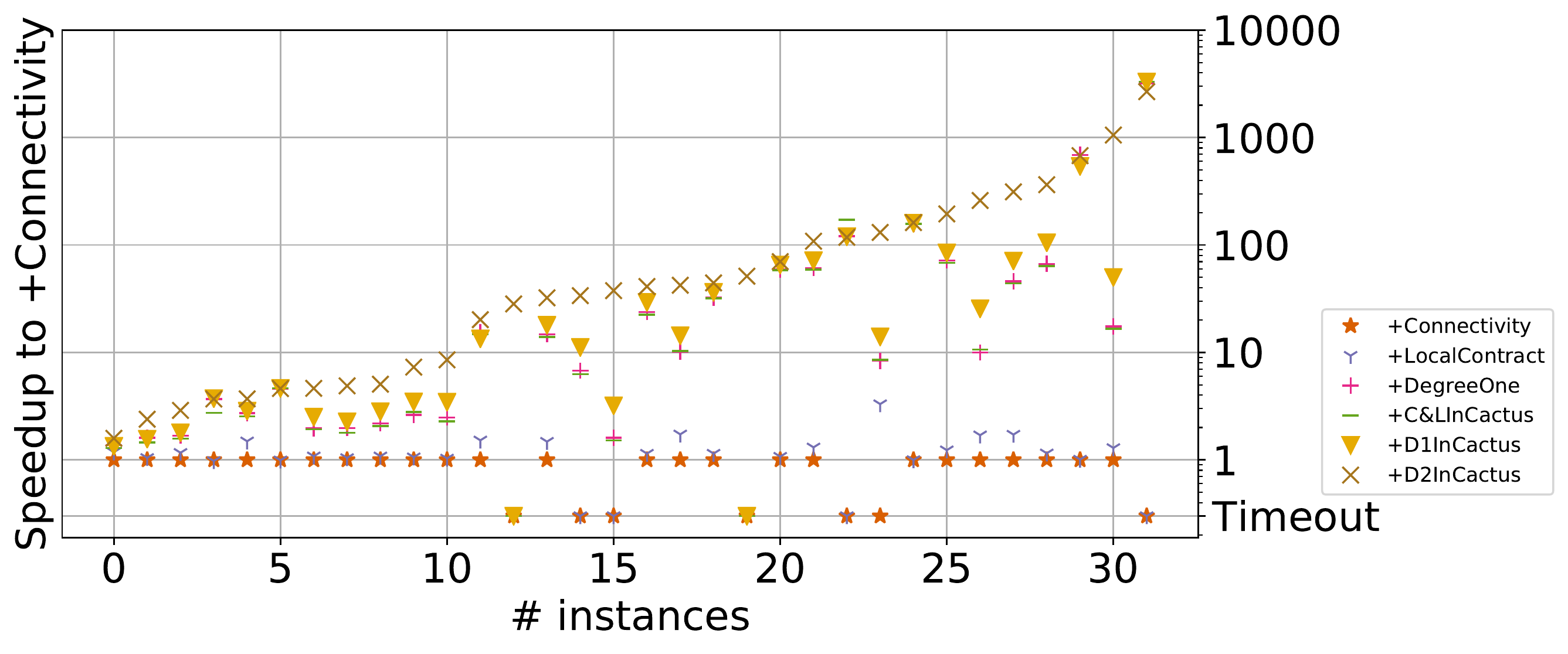}
  \caption{Speedup to \texttt{+Connectivity} on large graphs (Table~\ref{tab:large}).\label{fig:largeopt}}
\end{figure}

\optFour{} additionally runs high-connectivity and local contraction in every tenth recursion step. \optFive{} additionally contracts and re-inserts degree-one vertices in every recursion step. \optSix{} also runs the degree-two contraction as described in Section~\ref{sss:degreetwo}. We compare these variants on the graphs in Tables~\ref{tab:small}~and~\ref{tab:large}. We use a timeout of $30$ minutes on these graphs. If the baseline algorithm does not finish in the allotted time, we report speedup to the timeout, so a lower bound for the actual speedup.

Figure~\ref{fig:smallopt} shows the speedup of all variants to the baseline variant \optZero{} on all graphs in Table~\ref{tab:small}. We can see that already just adding \optOne{} gives a speedup of more than an order of magnitude for each of the graphs in the dataset. Most of the other optimizations manage to improve the running time of at least some instances by a large margin. Especially \optThree{}, which is the first contraction for edges that are in a minimum cut, has speedups of multiple orders of magnitude in some instances. This is the case as minimum cut edges that are incident to a degree-one vertex previously incur a flow problem on the whole graph each. However, it is very easy to see that the edge will be part of exactly one minimum cut, thus we can contract and re-insert it in constant time. Especially in graphs whose minimum cut is $1$, all edges can be quickly contracted, as they will either be incident to a degree-one vertex or be quickly certified to have a connectivity value of $>1$.

While rerunning \texttt{Connectivity} and \texttt{LocalContract} inside of the recursive algorithm of Nagamochi~\etal\cite{nagamochi2000fast} does usually not yield a large speedup, many graphs develop degree-one vertices by having their whole neighborhood contracted. Thus, \optFive{} has a significant speedup for most graphs in which $n^*$ is sufficiently large. \optSix{} has an even larger speedup on these graphs, even when the minimum cut is significantly higher than $2$, as there are often cascading effects where the contraction of an edge incident to a degree-two vertex often lowers the degree of neighboring vertices to two.

Figure~\ref{fig:largeopt} shows the speedup of all variants. As variant \optZero{} is not able to solve any of these instances in $30$ minutes, we use \optOne{} as a baseline. The results are similar to Figure~\ref{fig:smallopt}, but we can see even clearer how useful the contraction of degree-two vertices is in finding all minimum cuts: \optSix{} often has a speedup of more than an order of magnitude to all other variants and is the only variant that never times out.

\begin{table}[t!]
  \centering
  \caption{Huge social and web graphs. $n^*$ denotes number of vertices in cactus graph, max $n$ and max $m$ denote size of smaller block in most balanced cut\label{tab:huge}}
  \resizebox{\textwidth}{!}{
  \begin{tabular}{|r|r|r|r|r|r|r|r|r|}
    \hline
    Name & $n$ & $m$ & $n^*$ & max. $n$ & max. $m$ & seq. t & par. t\\ \hline\hline
    friendster & $65.6$M & $1.81$B & $13.99$M & \numprint{897} & \numprint{1793} & $1266.35$s & $138.34$s \\\hline
    twitter7 & $41.7$M & $1.20$B & $1.93$M & \numprint{47} & \numprint{1893} & $524.86$s & $72.51$s \\\hline
    uk-2007-05 & $104.3$M & $3.29$B & $9.66$M & \numprint{49984} & $13.8$M & $229.18$s & $40.16$s\\\hline
  \end{tabular}
  }
\end{table}

\subsection{Shared-memory Parallelism}

Table~\ref{tab:huge} shows the average running times of our algorithm both sequential and with $16$ threads on huge social and web graphs. Each of these graphs has more than a billion of edges and more than a million vertices in the cactus graph depicting all minimum cuts. On these graphs we have a parallel speedup factor of $5.7$x to $9.1$x using $16$ threads. On all of these graphs, a large part of the running time is spent in the first iteration of the kernelization routines, which already manages to contract most dense blocks in the graph. Thus, all subsequent operations can be performed on significantly smaller problems and are therefore much faster.  

\vspace*{-.1cm}
\section{Conclusion}

We engineered an algorithm to find all minimum cuts in large undirected graphs. Our algorithm combines multiple kernelization routines with an engineered version of the algorithm of Nagamochi, Nakao and Ibaraki~\cite{nagamochi2000fast} to find all minimum cuts of the reduced graph. Our experiments show that our algorithm can find all minimum cuts of huge social networks with up to billions of edges and millions of minimum cuts in a few minutes on shared memory. We found that especially the contraction of high-connectivity edges and efficient handling of low-degree vertices can give huge speedups. \ifEnableExtend Additionally we give a linear time algorithm to find the most balanced minimum cut given the cactus graph representation of all minimum cuts.\fi{} Future work includes finding near-minimum cuts.

\vspace*{-.1cm}

\bibliographystyle{plainurl}
\bibliography{paper}

\newpage 
\vfill

\begin{appendix}
\section{Graph Instances}

\begin{table}[H]
  \caption{Set of small graphs from various sources\label{tab:small}}
  \begin{tabular}{|r|r|r|r|r|}
    \hline
    Name & $n$ & $m$ & $\lambda$ & $n^*$ \\
    \hline\hline
    amazon & \numprint{64813} & \numprint{153973} & \numprint{1} & \numprint{10068} \\\hline
    auto & \numprint{448695} & $3.31M$ & \numprint{4} & \numprint{43} \\
    & \numprint{448529} & $3.31M$ & \numprint{5} & \numprint{102} \\
    & \numprint{448037} & $3.31M$ & \numprint{6} & \numprint{557} \\
    & \numprint{444947} & $3.29M$ & \numprint{7} & \numprint{1128} \\
    & \numprint{437975} & $3.24M$ & \numprint{8} & \numprint{2792} \\
    & \numprint{418547} & $3.10M$ & \numprint{9} & \numprint{5814} \\ \hline
    caidaRouterLevel & \numprint{190914} & \numprint{607610} & \numprint{1} & \numprint{49940} \\\hline
    cfd2& \numprint{123440} & $1.48M$ & \numprint{7} & \numprint{15} \\\hline
    citationCiteseer & \numprint{268495} & $1.16M$ & \numprint{1} & \numprint{43031} \\
     & \numprint{223587} & $1.11M$ & \numprint{2} & \numprint{33423} \\
     & \numprint{162464} & \numprint{862237} & \numprint{3} & \numprint{23373} \\
     & \numprint{109522} & \numprint{435571} & \numprint{4} & \numprint{16670} \\
     & \numprint{73595} & \numprint{225089} & \numprint{5} & \numprint{11878} \\
     & \numprint{50145} & \numprint{125580} & \numprint{6} & \numprint{8770} \\\hline
    cnr-2000 & \numprint{325557} & $2.74M$ & \numprint{1} & \numprint{87720} \\
    & \numprint{192573} & $2.25M$ & \numprint{2} & \numprint{33745} \\
    & \numprint{130710} & $1.94M$ & \numprint{3} & \numprint{11604} \\
    & \numprint{110109} & $1.83M$ & \numprint{4} & \numprint{9256} \\
    & \numprint{94664} & $1.77M$ & \numprint{5} & \numprint{4262} \\
    & \numprint{87113} & $1.70M$ & \numprint{6} & \numprint{5796} \\ 
    & \numprint{78142} & $1.62M$ & \numprint{7} & \numprint{3213} \\
    & \numprint{73070} & $1.57M$ & \numprint{8} & \numprint{2449} \\\hline
    coAuthorsDBLP & \numprint{299067} & \numprint{977676} & \numprint{1} & \numprint{45242} \\\hline
    cs4 & \numprint{22499} & \numprint{43858} & \numprint{2} & \numprint{2} \\\hline
    delaunay\_n17 & \numprint{131072} & \numprint{393176} & \numprint{3} & \numprint{1484} \\ \hline
    fe\_ocean & \numprint{143437} & \numprint{409593} & \numprint{1} & \numprint{40} \\\hline
    kron-logn16 & \numprint{55319} & $2.46M$ & \numprint{1} & \numprint{6325} \\\hline
    luxembourg & \numprint{114599} & \numprint{239332} & \numprint{1} & \numprint{23077} \\\hline
    vibrobox & \numprint{12328} & \numprint{165250} & \numprint{8} & \numprint{625} \\\hline
    wikipedia & \numprint{35579} & \numprint{495357} & \numprint{1} & \numprint{2172} \\
    \hline     
  \end{tabular}
\end{table}
\vfill
\pagebreak

\begin{table}[H]
  \caption{Set of large graphs from various sources\label{tab:large}}
  \begin{tabular}{|r|r|r|r|r|}
    \hline
    Name & $n$ & $m$ & $\lambda$ & $n^*$ \\ \hline\hline
    amazon-2008 & \numprint{735323} & $3.52M$ & \numprint{1} & \numprint{82520}\\
    & \numprint{649187} & $3.42M$ & \numprint{2} & \numprint{50611}\\
    & \numprint{551882} & $3.18M$ & \numprint{3} & \numprint{35752}\\
    & \numprint{373622} & $2.12M$ & \numprint{5} & \numprint{19813}\\
    & \numprint{145625} & \numprint{582314} & \numprint{10} & \numprint{64657}\\\hline
    coPapersCiteseer & \numprint{434102} & $16.0M$ & \numprint{1} & \numprint{6372} \\
    & \numprint{424213} & $16.0M$ & \numprint{2} & \numprint{7529} \\
    & \numprint{409647} & $15.9M$ & \numprint{3} & \numprint{7495} \\
    & \numprint{379723} & $15.5M$ & \numprint{5} & \numprint{6515} \\
    & \numprint{310496} & $13.9M$ & \numprint{10} & \numprint{4579} \\\hline
    eu-2005 & \numprint{862664} & $16.1M$ & \numprint{1} & \numprint{52232} \\
    & \numprint{806896} & $16.1M$ & \numprint{2} & \numprint{42151} \\
    & \numprint{738453} & $15.7M$ & \numprint{3} & \numprint{21265} \\
    & \numprint{671434} & $13.9M$ & \numprint{5} & \numprint{18722} \\
    & \numprint{552566} & $11.0M$ & \numprint{10} & \numprint{23798} \\\hline
    hollywood-2009 & $1.07M$ & $56.3M$ & \numprint{1} & \numprint{11923} \\
    & $1.06M$ & $56.2M$ & \numprint{2} & \numprint{17386} \\
    & $1.03M$ & $55.9M$ & \numprint{3} & \numprint{21890} \\
    & \numprint{942687} & $49.2M$ & \numprint{5} & \numprint{22199} \\
    & \numprint{700630} & $16.8M$ & \numprint{10} & \numprint{19265} \\\hline
    in-2004& $1.35M$ & $13.1M$ & \numprint{1} & \numprint{278092} \\
    & \numprint{909203} & $11.7M$ & \numprint{2} & \numprint{89895} \\
    & \numprint{720446} & $9.2M$ & \numprint{3} & \numprint{45289} \\
    & \numprint{564109} & $7.7M$ & \numprint{5} & \numprint{33428} \\
    & \numprint{289715} & $5.1M$ & \numprint{10} & \numprint{12947} \\\hline
    uk-2002 & $18.4M$ & $261.6M$ & \numprint{1} & $2.5M$ \\
    & $15.4M$ & $254.0M$ & \numprint{2} & $1.4M$ \\
    & $13.1M$ & $236.3M$ & \numprint{3} & \numprint{938319} \\
    & $10.6M$ & $207.6M$ & \numprint{5} & \numprint{431140} \\
    & $7.6M$ & $162.1M$ & \numprint{10} & \numprint{298716} \\
    & \numprint{657247} & $26.2M$ & \numprint{50} & \numprint{24139} \\
    & \numprint{124816} & $8.2M$ & \numprint{100} & \numprint{3863} \\
    \hline
  \end{tabular}
\end{table}
\end{appendix}

\end{document}